\documentclass[a4paper,twocolumn,10pt]{quantumarticle}
\pdfoutput=1

\usepackage[centering,hmargin=1.7cm,vmargin=0.78in]{geometry}

\usepackage{amsmath,mathtools,amsthm,amssymb,pifont}
\usepackage[utf8]{inputenc}
\usepackage[T1]{fontenc}
\usepackage[american]{babel}
\usepackage{graphicx,bbold,titlesec}
\usepackage{braket}
\usepackage{float}
\usepackage{color}
\usepackage{hyperref}

\usepackage{pifont}
\usepackage{bbold}
\usepackage{biolinum}

\usepackage{cite}

\usepackage{mathtools}
\usepackage{bm}

\usepackage{array}
\usepackage{color}
\usepackage{upgreek}
\usepackage{float}
\usepackage{booktabs}
\usepackage{url}
\usepackage{braket}
\usepackage{amsthm}
\usepackage{bm}% bold math
\usepackage[T1]{fontenc}
\usepackage{scrextend}
\usepackage{enumerate}
\usepackage{enumitem}
\usepackage[dvipsnames]{xcolor}
\definecolor{BlueRome}{HTML}{4287f5}
\definecolor{C1}{RGB}{52, 89, 149}
\definecolor{C2}{RGB}{251, 77, 61}
\definecolor{C3}{RGB}{3, 206, 164}
\definecolor{C4}{RGB}{202, 21, 81}
\definecolor{cyanRed}{RGB}{252, 5, 120}
\hypersetup{colorlinks=true, linkcolor=BlueRome, citecolor=BlueRome, urlcolor=cyanRed}

\definecolor{purple}{HTML}{E5E3F5}
\definecolor{amaranth}{HTML}{F5CFD8} % liek a red
\definecolor{yellow}{HTML}{FFEAB6}
\definecolor{pink}{HTML}{FFDBDA}

\definecolor{darkpurple}{HTML}{3D348B}
\definecolor{darkamaranth}{HTML}{AB2346}
\definecolor{darkyellow}{HTML}{FFBA08}
\usepackage{dsfont}
\usepackage{epstopdf}
\usepackage{tikz}
\usepackage{mathrsfs}
\usepackage[export]{adjustbox}
\usepackage{thmtools}
\usepackage{amsmath}
\usepackage{thm-restate}
\usepackage{enumerate}
\usepackage{enumitem}
\usepackage{amssymb}
\newtheorem{thm}{Theorem}

\newtheorem{lemma}[thm]{Lemma}

\newtheorem{mydef}{Definition}

\theoremstyle{remark}

\newcommand*{\ot}{\otimes}
\newcommand*{\nn}{\nonumber}

\newcommand*{\id}{\mathds{1}}

\newcommand*{\llangle}{\langle \! \langle}
\newcommand*{\rrangle}{\rangle \! \rangle}

\newcommand*{\mc}{\mathcal}

\DeclareMathOperator{\tr}{tr}

\usepackage{mathtools}
\usepackage{thm-restate}

\usepackage[table]{xcolor}

\newenvironment{hproof}{%
  \proof}{\endproof}

\usepackage[export]{adjustbox}
\hypersetup{
pdftitle={Classical Simulability from Operator Entanglement Scaling},
pdfsubject={tensor networks},
pdfauthor={Neil Dowling},
pdfkeywords={Quantum chaos, Many-body quantum physics, tensor networks}
}

\begin{document}

\title[]{{Classical Simulability from Operator Entanglement Scaling}}

\author{Neil Dowling}
\email[]{ndowling@uni-koeln.de}
\affiliation{Institut f\"ur Theoretische Physik, Universit\"at zu K\"oln, Z\"ulpicher Strasse 77, 50937 K\"oln, Germany}

% \date{\today}
% \pacs{}

% <600 characters - = 599 at the moment!!! (w/o spaces)
\begin{abstract}
     Local-operator entanglement (LOE) quantifies the nonlocal structure of Heisenberg operators and serves as a diagnostic of many-body chaos. We provide rigorous bounds showing when an operator can be well-approximated by a matrix-product operator (MPO), given asymptotic scaling of its LOE $\alpha$-R\'enyi entropies. Specifically, we prove that a volume law scaling for $\alpha\geq 1$ implies that the operator cannot be approximated efficiently as an MPO while faithfully reproducing all expectation values. On the other hand, if we restrict to correlations over a relevant sub-class of (ensembles of) states, then logarithmic scaling of the $\alpha < 1$ entropies implies MPO simulability.
     This result covers a range of relevant quantities, including infinite temperature autocorrelation functions, out-of-time-ordered correlators, and average-case expectation values over ensembles of computational basis states. Beyond this regime, we provide numerical evidence together with a random matrix model 
    to argue that this simulability result also typically holds for arbitrary states.
    % also for out-of-equilibrium expectation values, logarithmic scaling for $\alpha < 1$ R\'enyi LOE typically guarantees simulability.    
    Our results put on firm footing the heuristic expectation that a low operator entanglement implies efficient tensor network representability, extending celebrated foundational results from the theory of matrix-product states and providing a formal link between quantum chaos and classical simulability. 
\end{abstract}

\keywords{Quantum chaos, Many-body quantum physics}

\maketitle
% \com{LENGTH  Limit is 3750}

% \stoptoc
\section{Introduction}
% \textit{Introduction.---} 
Tensor networks play a central role in the numerical study of quantum many-body systems. Their efficiency relies on the structure of entanglement: in one dimension, a state admits a faithful matrix-product state (MPS) representation with polynomial cost in the system size $N$ when its entanglement entropy scales at most logarithmically with $N$~\cite{Verstraete2006}. This principle underpins the remarkable success of tensor network-based algorithms~\cite{Eisert_Cramer_Plenio_2010,ORUS2014117} for tasks such as determining the ground state of local Hamiltonians~\cite{WhiteDMRG,SCHOLLWOCK201196} and simulating dynamics~\cite{PhysRevLett.93.040502,PhysRevLett.107.070601}.
However, in the case of time evolution, one quickly encounters the so-called `entanglement barrier': the entanglement of an out-of-equilibrium many-body state tends to grow as a volume-law when evolving according to non-localized dynamics~\cite{Calabrese_2005}, implying that tensor network-based methods are fundamentally unsuitable for studying the dynamics of the full many-body wavefunction, at least directly. 

Shifting perspective, one can instead apply the principles of tensor networks to \textit{operator} evolution. The relevant question is then how well a Heisenberg operator can be represented as a matrix-product operator (MPO)~\cite{Prosen2007,Hartmann2009}; see Fig.~\ref{fig:main-results}. Although physically equivalent to evolving a state, simulating the relevant operator instead can lead to an exponential computational advantage. For instance, a local Pauli $\sigma_z$ evolving under a Gaussian Fermionic Hamiltonian can be represented exactly as an MPO of constant bond dimension~\cite{Prosen2007,Muth2011}, in contrast to the volume-law growth of entanglement of these systems in the Schr\"odinger picture~\cite{Calabrese_2005}. Beyond this, intrinsic properties of operators---inaccessible via sampling a single quantum state---are also of interest: infinite temperature autocorrelation functions (ITACs) govern linear response theory~\cite{richter_impact_2019,Richter2019} and 
% diffusive
operator hydrodynamics~\cite{Steinigeweg2009,Bertini2021transport,Wang2024}, while out-of-time-ordered correlators (OTOCs) probe information scrambling~\cite{Shenker_Stanford_2014,Nahum2018,dowling2023scrambling}.

\begin{figure}[t]
    \centering
    \includegraphics[width=\linewidth]{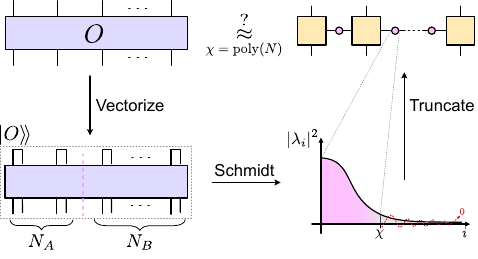}
    \caption{A schematic of the setting studied in this work. Anti-clockwise from the top-left: An operator $O$ on a chain of $N=N_A+N_B$ qudits is in one-to-one correspondence to a pure state $| O \rrangle$ in a doubled Hilbert space. Schmidt decomposition across some bipartition leads to a spectrum of singular values $\{ |\lambda_i|^2 \}_{i=1}^{d^{2N_A}}$, from which the (local-)operator entanglement (LOE) entropies are defined. Truncating and keeping only terms with the largest $\chi$ Schmidt values across each bipartition leads to an MPO approximation of $O$. We study when a sufficiently small [large] LOE implies the [non-]\! existence of an efficient and faithful MPO approximation.
    }
    \label{fig:main-results}
\end{figure}

If the scaling of entanglement dictates the efficiency of MPS methods, what governs the cost of simulating an {operator} as an MPO? 
An analogous quantity to `entanglement' can be defined for operators. Namely, a suitably normalized operator is in one-to-one correspondence with a pure state through its action on half of a maximally entangled state across two copies of Hilbert space, 
\begin{equation}
    | O \rrangle := (O \otimes \id) \ket{\phi^+}. \label{eq:choi}
\end{equation}
The entanglement of $| O \rrangle$ across some spatial bipartition is called the (local-)operator entanglement (LOE)~\cite{Zanardi2001,Prosen2007}, studied as a witness of many-body quantum chaos~\cite{Prosen2007,Prosen2007a,Prosen2009,Dubail_2017,Jonay2018,Alba2019,Kos2020,Kos2020II,Alba2021,dowling2023scrambling,Dowling2024butterfly,Alba2025-mx,bertini2025randomperm}. We call it `LOE' here to be consistent with this literature and to emphasize that we are working with Heisenberg operators, but make no assumptions on the locality of the operator\footnote{LOE should be distinguished from the `operator-space entanglement entropies' of the unitary evolution propagator~\cite{Zanardi2001}, observed to always scale as a volume-law for non-localized dynamics~\cite{Zhou2017,Dubail_2017}, and so it is not as relevant to the present questions on efficient simulation.}. Relevant for us, the LOE is exactly the entropy of the Schmidt coefficients of the corresponding bond when writing $O$ as an MPO.

Drawing an analogy to the case of MPS, one might expect that the scaling of its LOE entropies is indicative of a Heisenberg operator's representability as an MPO. Indeed, this is often informally assumed~\cite{Prosen2007,swingle2020,Rakovszky2022,begusic2024realtime,Dowlin2024LOE-OSRE}. However, unlike for the theory of MPS~\cite{Verstraete2006,Schuch_2008,Eisert_Cramer_Plenio_2010,ORUS2014117}, so far these claims lack a rigorous foundation. 

Here, we close this gap. We provide theoretical guarantees on the simulability of Heisenberg operators from the scaling of their LOE R\'enyi entropies in 1D. Whether a given operator can be efficiently represented as an MPO depends intrinsically on the state over which one wants to faithfully reproduce correlations. We first prove that a linear growth of  $\alpha\geq 1$ LOE entropies necessarily indicates that no MPO can efficiently reproduce expectation values for \textit{all} states. On the other side, we show that a logarithmic growth of $\alpha<1$ R\'enyi entropies implies efficient MPO approximability according to correlations over a relevant class of states (such as ITACs),
and extend this to higher-order OTOCs. These results are summarized in Table~\ref{tab:results}. Finally, via numerical evidence and a random matrix model, we argue that in practice, for physically relevant operators, the efficient simulability results should carry over to arbitrary states.

\section{Operator Entanglement and Matrix Product Operators}
% \textit{LOE and MPOs.---} 
Consider an $N$-qudit spin-chain Hilbert space $\mc{H}$ of dimension $D=d^N$, and some Hermitian operator $O$ on this space. We take $O$ to be `normalized' and `bounded', by which we mean the Hilbert-Schmidt norm is given by $\| O \|_2=\sqrt{\tr[O^\dagger O]}=\sqrt{D}$ and that the spectral norm satisfies $\| O \|_{\infty} \leq 1$ respectively\footnote{$\| O\|_p := \tr[|O|^p]^{1/p}$ refers to the Schatten $p$-norm of $O$, with $\| O \|_{\infty} $ equal to its largest singular value.}. 
In this case, $O$ is isomorphic to a pure state through the vectorization/normalization mapping, as shown in Eq.~\eqref{eq:choi} and Fig.~\ref{fig:main-results}. Using this mapping, the LOE R\'enyi entropies of $O$ across some spatial bipartition in (doubled) Hilbert space, $\mc{H}_A^{\ot 2} \otimes \mc{H}_{B}^{\ot 2}$, are defined as
\begin{equation}
    E^{(\alpha)}_A(O) := (1-\alpha)^{-1} \log( \tr[\tr_{B}[|O\rrangle\!\llangle O|]^{\alpha}] ) \label{eq:loe}
\end{equation}
for $\alpha \geq 0$, and $\alpha=0,1,\infty $ defined through appropriate limits. We stress that the partial trace is over local sites of dimension $d^2$; see App.~\ref{ap:proofs}. 

\begin{figure}
\centering
\renewcommand{\figurename}{Table}
\setcounter{figure}{0}  
\includegraphics[width=\linewidth]{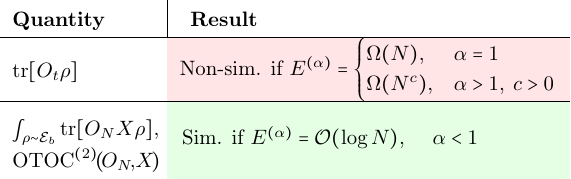}
\caption{Results on the approximability of an operator's properties from the scaling of its LOE R\'enyi entropies, Eq.~\eqref{eq:loe}. Simulability (`sim.') is defined in Def.~\ref{def:sim}, $\mc{E}_b$ refers to ensembles of states with a sufficiently mixed first moment (Def.~\ref{def:low-av}), $X$ is some other normalized and bounded Hermitian operator, and OTOCs are defined in Eq.~\eqref{eq:otocDef}. The top row summarizes Thm.~\ref{thm:micro} and the bottom Thms.~\ref{thm:itac} and~\ref{thm:otoc}.}
    \label{tab:results}
\end{figure}

Through Eq.~\eqref{eq:choi}, one can adapt standard many-body state techniques to operator space. Schmidt decomposition of $|O \rrangle$ allows us to truncate across a bipartition $\mc{H}=\mc{H}_A \otimes \mc{H}_B$, to obtain a rank $\chi$ approximation to $O$,
\begin{equation}
    O = \sum_{i=1}^{d^{2 n}} \lambda_i A_i \otimes B_i \underset{\mathrm{truncate}}{\to} \sum_{i=1}^{\chi} \lambda_i A_i \otimes B_i =: \tilde{O}_{\chi}^{(n)}.\label{eq:schmidt}
\end{equation}
Here, $\lambda_i \geq \lambda_{i+1}$ with $\sum_i |\lambda_i|^2=1$, $\tr[A_i^\dagger A_j]=D\delta_{ij} =\tr[B_i^\dagger B_j]$ where $A_i$ has support on the first $n=N_A$ sites, and WLOG we take $N_A \leq N_B$ throughout this work. A truncation strategy can be successively applied for every spatial bipartition (each $n=1,2,\dots,N-1$), using standard MPS methods~\cite{ORUS2014117} to arrive at an MPO approximation to $O$,
\begin{align}
    \tilde{O}_{\chi} := \sum_{i_1,\dots,i_N=0}^{d^2} \tr[\Lambda^{(1)}_{i_1} \dots \Lambda^{(N)}_{i_N} ] P_{i_1} \otimes \dots \otimes P_{i_N}. \label{eq:mpo}
\end{align}
Above, $\Lambda^{(\ell)}_{i_\ell} $ are matrices of dimension  $\chi \times \chi$ and $P_{i_\ell}$ are some orthonormal operator basis for the spin $\ell$: $\tr[P_{i}^\dagger P_j] = D \delta_{ij}$. Natural choices are the computational basis matrix elements (with a $\sqrt{D}$ normalization)~\cite{Hartmann2009}, 
or Pauli matrices, which have the advantage of being amenable to Pauli propagation principles~\cite{Rakovszky2022,begusic2024realtime,schuster2024polynomialtime,angrisani2024classically,dowling2024magicheis,Dowlin2024LOE-OSRE,rudolph2025paulipropag}. Iterative evolution then truncation is the principle behind Heisenberg picture density matrix renormalization group algorithms (H-DMRG) for time evolution~\cite{Hartmann2009}.

In the remainder of this work we study when an operator can be faithfully and efficiently represented as an MPO compared to the scaling of its LOE entropies. To rigorously present our results, we first define precisely what we mean by `efficient approximation'.
\begin{mydef} \textbf{(MPO Simulability)}
    Consider a discrete family of Hermitian operators $O_N$ with support on an $N$-qudit Hilbert space, together with a family of functionals of these operators, $f_N: \mathrm{Herm}_{d^N} \to \mathbb{R}$. If, for any $\varepsilon >0$, there exists a corresponding family of MPOs $\tilde{O}_{N,\chi}$ with bond dimension $\chi = \mathrm{poly}(N,\varepsilon^{-1} ) $ such that: 
    \begin{equation}
        |f_N(O_N)-f_N(\tilde{O}_{N,\chi}) | \leq \varepsilon,
    \end{equation}
    then we say that $O_N$ is {efficiently approximable as an MPO} according to the measure induced by $f_N$. \label{def:sim}
\end{mydef}
This definition is adapted from that found in Refs.~\cite{Verstraete2006,Schuch_2008}, with the subtle but important difference being the explicit dependence on $f_N$. This will help uncover how certain physically relevant properties of an operator---i.e., particular functionals of it---can be much cheaper to approximate compared to others. In the following, we will use `operator' or `state' to refer to families thereof, well-defined on $N$ sites.

\section{Simulability Guarantees}
% \textit{Simulability Guarantees.---} 
A natural first example are expectation values of $O_N$ according to some (class of) states $\rho$.
From H\"older's inequality, we have that for any state $\rho$, the error in correlations can be tightly bound as
\begin{equation}
    |\tr[(O_N-\tilde{O}_{N,\chi}) \rho]| \leq \| \rho \|_1  \| \Delta O \|_\infty = \| \Delta O \|_\infty,\label{eq:upperHold}
\end{equation}
where $\Delta O := O_N-\tilde{O}_{N,\chi}$.
From Eq.~\eqref{eq:upperHold}, we arrive at a no-go theorem when the LOE obeys a volume-law.
\begin{thm} \label{thm:micro}
    Consider a Hermitian operator $O_N$ and an MPO approximation $\tilde{O}_{N,\chi}$. Then there exists a state $\rho$ such that the error in its expectation value, $\varepsilon := |\tr[\Delta O \rho ]|$, for $\alpha>1$ satisfies
    \begin{equation}
            \log(\chi) \! \geq  \!\max \Big\{ 
            \! E^{(1)} -\varepsilon N\log(d) -1,E^{(\alpha)}+ \frac{\alpha \log(1-\varepsilon^2 )}{\alpha - 1} \! \Big\} , \nn
            % \label{eq:thm1}
    \end{equation}
    where $E^{(\alpha)}:=\max_{A}\{ E^{(\alpha)}_A(O_N)\}$ is the maximum LOE over bipartitions.
\end{thm}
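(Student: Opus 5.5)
The plan is to reduce the statement to a bound on the normalized Hilbert--Schmidt distance between $|O_N\rrangle$ and the vectorized MPO, and then use the fact that an MPO of bond dimension $\chi$ has Schmidt rank at most $\chi$ across every cut.

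\textbf{Choice of state and reduction.} Take $\rho=|\psi\rangle\langle\psi|$, where $|\psi\rangle$ is an eigenvector of the Hermitian operator $\Delta O$ whose eigenvalue has the largest modulus. Then $\varepsilon=|\tr[\Delta O\rho]|=\|\Delta O\|_\infty$, so Eq.~\eqref{eq:upperHold} is saturated. Next use $\|\Delta O\|_2^2\le D\|\Delta O\|_\infty^2$. Under the normalized vectorization of Eq.~\eqref{eq:choi} this reads $\big\||O_N\rrangle-|\tilde O_{N,\chi}\rrangle\big\|^2\le\varepsilon^2$. Here $|O_N\rrangle$ is a unit vector, while $|\tilde O_{N,\chi}\rrangle$ need not be normalized.

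\textbf{Truncated weight.} Fix the bipartition $A$ that attains $E^{(\alpha)}$ (or $E^{(1)}$); the argument works for any cut. By the MPO form, Eq.~\eqref{eq:mpo}, $|\tilde O_{N,\chi}\rrangle$ has Schmidt rank at most $\chi$ across $A$. Write $p_i=|\lambda_i|^2$ for the Schmidt weights of $O_N$ and $P_\chi=\sum_{i\le\chi}p_i$. The Eckart--Young theorem says that the distance from $|O_N\rrangle$ to any vector of Schmidt rank at most $\chi$ is at least $\sqrt{1-P_\chi}$. Hence
\begin{equation}
    P_\chi\ge 1-\varepsilon^2. \nn
\end{equation}
This single inequality is the key intermediate bound. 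The second entry of the max is the stronger one for small $\varepsilon$; the first handles $\alpha=1$.

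\textbf{Case $\alpha>1$.} Drop the tail and apply the power-mean (Jensen) inequality on the top $\chi$ weights:
\begin{equation}
    \sum_i p_i^\alpha\ge\sum_{i\le\chi}p_i^\alpha\ge\chi^{1-\alpha}P_\chi^{\alpha}. \nn
\end{equation}
Dividing the logarithm by $1-\alpha<0$ flips the inequality and gives $E^{(\alpha)}\le\log\chi-\frac{\alpha}{\alpha-1}\log P_\chi\le\log\chi-\frac{\alpha}{\alpha-1}\log(1-\varepsilon^2)$. This rearranges to the second term of the max.

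\textbf{Case $\alpha=1$.} Compare $|O_N\rrangle$ with its normalized rank-$\chi$ Schmidt truncation $|\Psi_\chi\rrangle$. Their fidelity is $P_\chi$, so the trace distance between the two pure states is $\sqrt{1-P_\chi}\le\varepsilon$. By monotonicity under partial trace, the reduced states on $A$ are also within trace distance $T\le\varepsilon$. The Fannes--Audenaert inequality then gives
\begin{equation}
    |S(\rho_A)-S(\sigma_A)|\le T\log(\dim)+h(T), \nn
\end{equation}
with binary entropy $h(T)\le1$ (logarithms base 2). The reduced space has dimension $d^{2N_A}\le d^{N}$ because $N_A\le N/2$. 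Finally, $S(\sigma_A)\le\log\chi$ since $\sigma_A$ has rank at most $\chi$. Together these give $E^{(1)}\le\log\chi+\varepsilon N\log d+1$, the first term.

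\textbf{Main obstacle.} The delicate step is the $\alpha=1$ continuity bound. It needs the $d^2$-local dimension bookkeeping and the condition $N_A\le N_B$ to reach exactly $\varepsilon N\log d$. It also needs care that Fannes--Audenaert applies only when $T\le 1-1/\dim$, which holds in the nontrivial regime of small $\varepsilon$; otherwise the bound is vacuous. The rest is direct.
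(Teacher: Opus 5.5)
Your proposal is correct and follows essentially the paper's argument. You saturate Eq.~\eqref{eq:upperHold} with a top eigenvector of the error, which is Hermitian given the Hermitian approximation the paper arranges. You then pass from $\|\Delta O\|_\infty$ to the normalized Hilbert--Schmidt tail $\sqrt{1-P_\chi}$, and finish with Fannes--Audenaert for $\alpha=1$ and the Schuch et al.\ tail bound for $\alpha>1$ (Lemma~\ref{lem:majorize}b, which your power-mean step re-derives). Taking $\rho$ from the full MPO error and invoking Eckart--Young is slightly cleaner than the paper's reduction to a single-cut truncation: it makes rigorous the paper's remark that further truncations only worsen the error, and it gives a single $\rho$ valid for every cut at once.
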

\begin{hproof}
    Given that $O_N$ is Hermitian, we can choose a Hermitian, local basis in Eq.~\eqref{eq:mpo} that ensures that, after singular value decomposition of the correlation matrix, $\tilde{O}_{N,\chi}$ is also Hermitian. Then we know there exists a state $\rho$ for which the error is equal to the spectral norm distance: $\varepsilon :=|\tr[\Delta O \rho]|=\| \Delta O \|_{\infty}$. Now, it is sufficient to consider any bipartition $\mc{H}_A \otimes \mc{H}_B$ that leads to a rank-$\chi$ approximation through truncation. Through a Schatten norm inequality and the normalization of $O_N$, 
    \begin{equation}
        \Big({\sum_{i=\chi+1}^{D_A^2} \lambda_i^2 }\Big)^{1/2}= D^{-1/2}\| \Delta O^{(n)}\|_2 \leq \| \Delta O^{(n)} \|_\infty \leq \varepsilon, \label{eq:lower}
    \end{equation}
    where $\Delta O^{(n)}:= O_N-\tilde{O}_{N,\chi}^{(n)} $ and $\{\lambda_i\}_{i=1}^{d^{2n}}$ are the Schmidt values of $|O\rrangle$ across the specified cut (cf. $\|\Delta O\|_\infty$, which corresponds to the error compared to a full MPO approximation). We then proceed by applying the two main results of Ref.~\cite{Schuch_2008} to the states $|O_N\rrangle$ and $|\tilde{O}_{N,\chi}\rrangle$, to lower-bound the left-hand side of Eq.~\eqref{eq:lower} in terms of the entropies of the distribution $\{ | \lambda_i|^2\} $. As the above argument holds for any bipartition, we can pick that which leads to the tightest bound on $\chi$. A detailed proof is supplied in App.~\ref{ap:nogo}.
\end{hproof}
Considering the negation of Def.~\ref{def:sim} and recalling Landau asymptotic notation\footnote{$f(N)=\Omega(g(N)) $ and $f(N)=\mc{O}(g(N)) $ mean, respectively, that $\lim_{N \to \infty} f/g \leq c$ and $\lim_{N \to \infty} f/g \geq c$, for some constant $c$.}, Thm.~\ref{thm:micro} means that if either $E_A^{(1)}(O_N) = \Omega(N)$ or $E_A^{(\alpha)}(O_N) = \Omega(N^c)$ for $\alpha > 1$ and $c>0$ across a bipartition, then $\chi$ is superpolynomial in $N$ and so $O_N $ cannot be efficiently approximated as an MPO according to expectation values for all quantum states. This is summarized in the first row of Table~\ref{tab:results}.

A similar relation to Thm.~\ref{thm:micro} holds between MPS and state R\'enyi entanglement entropies~\cite{Schuch_2008}. However, if we now instead ask about \textit{approximability} of an operator as an MPO, we begin to see deviations from MPS theory. Eq.~\eqref{eq:upperHold} means that the spectral norm bounds arbitrary expectation values. In contrast, operator truncation (Eq.~\eqref{eq:schmidt}) is optimal in the Hilbert-Schmidt norm error---which in turn can be related to LOE entropies.
In full generality, their only relation is: $D^{-1/2}\| \Delta O\|_2 \leq \| \Delta O \|_\infty \leq \| \Delta O\|_2$. 
% Therefore, $ \| \Delta O \|_\infty$ can be exponentially larger than $D^{-1/2}\| \Delta O \|_2$. 
Circumventing the worst-case scenario of Thm.~\ref{thm:micro}, our strategy will now be to restrict to relevant classes of states for which $O_N$ is measured. The simplest example is infinite temperature correlations, 
\begin{equation}
    {D^{-1}}\tr[\Delta O  X] \leq D^{-1/2}{\| \Delta O \|_2} ,\label{eq:itac}
\end{equation}
from the Cauchy-Schwarz inequality for any normalized $X$. Beyond infinite temperature correlations, we consider ensembles of states whose average is sufficiently mixed. 
\begin{mydef} \textbf{(Low-average Ensemble)} \label{def:low-av}
    An ensemble $\mc{E}_b$ of quantum states is called low-average if its first moment satisfies $\| \overline{\rho} \|_\infty \leq {b}{D^{-1}}$ for some $ b \geq 1$.
\end{mydef}
This definition follows closely from one introduced in Ref.~\cite{schuster2024polynomialtime}, though we allow $b$ to scale with $N$. Relevant examples include: the trivial ensemble of a single, sufficiently mixed density matrix (cf. Eq.~\eqref{eq:itac}), uniform distributions of computational basis states, and unitary designs. For these states, we can relate MPO simulability to the LOE, extending Eq.~\eqref{eq:itac}. 
\begin{thm} \label{thm:itac}
     Consider an operator $O_N$, an MPO approximation $\tilde{O}_{N,\chi}$, and a low-average ensemble of states $\mc{E}_b$. Then on average over $\rho \sim \mc{E}_b$ and 
     % error in approximating $O_N$ as an MPO 
     for $\alpha<1$, 
    \begin{equation}
         \overline{| \tr[ \Delta O X \rho ] | }\leq  \!N b^{1/2} \exp \!\left( \! \frac{1-\alpha}{2\alpha}\Big( E^{(\alpha)} \! -\! \log\big( \frac{\chi}{1-\alpha}  \!\big)  \!\Big)  \!\right) \!, \! \nn
         % \label{eq:thm2}
    \end{equation}
    where $E^{(\alpha)}:=\max_{A}\{ E^{(\alpha)}_A(O_N)\}$ and $X$ is some bounded and normalized operator.
\end{thm}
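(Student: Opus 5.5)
Our plan has three stages. First we bound the average error by a Hilbert–Schmidt quantity, using the low-average property. Next we decompose the full MPO error into a sum of single-cut truncation errors. Finally we bound each single-cut error by the R\'enyi LOE through the standard Schuch et al.\ tail estimate for $\alpha<1$.

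\textbf{Step 1: reduction to a Hilbert–Schmidt norm.} Write $|\tr[\Delta O X\rho]|$ and use Cauchy–Schwarz in the Hilbert–Schmidt inner product weighted by $\rho$:
\[
|\tr[\Delta O X \rho]| = |\tr[\rho^{1/2}\Delta O\, X\rho^{1/2}]| \le \tr[\Delta O^\dagger \Delta O\, \rho]^{1/2}\,\tr[X\rho X^\dagger]^{1/2}.
\]
Since $\|X\|_\infty\le 1$, the second factor is at most $1$. Averaging over $\rho\sim\mc{E}_b$ and applying Jensen (concavity of the square root) gives
\[
\overline{|\tr[\Delta O X\rho]|}\le \tr[\Delta O^\dagger\Delta O\,\overline{\rho}]^{1/2}\le (b/D)^{1/2}\|\Delta O\|_2 .
\]
The last inequality uses that $\Delta O^\dagger \Delta O\ge 0$ together with $\|\overline\rho\|_\infty\le b/D$. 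This is the only place the low-average assumption enters, and it produces the factor $b^{1/2}$.

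\textbf{Step 2: splitting into single cuts.} We take $\tilde O_{N,\chi}$ to be the MPO produced by successive truncations, as described around Eq.~\eqref{eq:mpo}. The standard MPS truncation-error argument (Verstraete–Cirac) bounds the vector distance $\||O\rrangle-|\tilde O\rrangle\|$ by the sum over cuts $n=1,\dots,N-1$ of the discarded weight at each cut. In operator language this reads
\[
D^{-1/2}\|\Delta O\|_2\le \sum_{n}\Big(\sum_{i>\chi}\lambda_i^{(n)2}\Big)^{1/2}\le N\max_n \epsilon_n(\chi),
\]
where $\epsilon_n(\chi)$ is the square root of the discarded weight at cut $n$. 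This supplies the prefactor $N$. The argument carries over because the truncations are orthogonal projections on the doubled space. A secondary care point is that the bound must be stated for the sequentially truncated MPO; if needed, one can use the sum of squares instead, which gives an even better $\sqrt N$.

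\textbf{Step 3: single-cut tail bound.} This is the main technical step. For a probability vector $p_i=\lambda_i^2$ sorted decreasingly and $\alpha<1$, we need
\[
\sum_{i>\chi}p_i\le \exp\!\Big(\tfrac{1-\alpha}{\alpha}\big(E^{(\alpha)}-\log\tfrac{\chi}{1-\alpha}\big)\Big).
\]
We would prove this following Schuch et al. Ordering gives $p_i\le \big(\sum_j p_j^\alpha/i\big)^{1/\alpha}$. Hence
\[
\sum_{i>\chi}p_i\le \Big(\sum_j p_j^\alpha\Big)^{1/\alpha}\sum_{i>\chi}i^{-1/\alpha}\le e^{\frac{1-\alpha}{\alpha}E^{(\alpha)}}\,\frac{\alpha}{1-\alpha}\,\chi^{1-1/\alpha}.
\]
The remaining work is to massage the constant $\alpha/(1-\alpha)$ into the form $(1-\alpha)^{-(1-\alpha)/\alpha}$. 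This should follow from a slightly sharper integral comparison, or else we absorb it, since $\alpha^{\alpha}\le 1$ and related inequalities apply. Getting this constant exactly is where I expect fiddly work. Taking square roots produces the exponent $\frac{1-\alpha}{2\alpha}$, maximizing over cuts produces $E^{(\alpha)}$, and combining with Steps 1 and 2 yields the stated bound.
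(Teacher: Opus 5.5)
Your Steps 1 and 2 follow the paper's architecture. The low-average property reduces everything to $(b/D)^{1/2}\|\Delta O\|_2$, and the successive orthogonal projections are stitched together with the triangle inequality, which gives the factor $N$. Your pointwise Cauchy--Schwarz with $\rho^{1/2}$ followed by Jensen is a cleaner route to the same inequality than the paper's mean-square computation. Your $\sqrt N$ remark is also correct for arbitrary orthogonal projections. With $\phi_k:=P_k\cdots P_{N-1}\psi$, the identity $\psi-P_k\phi_{k+1}=(\id-P_k)\psi+P_k(\psi-\phi_{k+1})$ is an orthogonal decomposition, so squared single-cut errors add, and this sharpens the paper's factor $N-1$.

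The gap is in Step 3. Bounding every tail entry separately via $p_i\le(\sum_j p_j^\alpha/i)^{1/\alpha}$ and summing gives the prefactor $\alpha/(1-\alpha)$ in front of $\chi^{-(1-\alpha)/\alpha}e^{(1-\alpha)E^{(\alpha)}_A/\alpha}$. This is exactly the $\chi\to\infty$ limit of $\chi^{(1-\alpha)/\alpha}\sum_{i>\chi}i^{-1/\alpha}$, so no sharper integral comparison can improve it.

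The prefactor you need is $(1-\alpha)^{(1-\alpha)/\alpha}$. The exponent is positive, not negative as you wrote, so this constant is at most $1$. Your $\alpha/(1-\alpha)$ exceeds it once $\alpha\gtrsim 0.3$: at $\alpha=1/2$ you get $1$ instead of $1/2$, and as $\alpha\to1$ yours diverges while the required one tends to $1$. So nothing can be absorbed, and the stated inequality does not follow. Even combined with your $\sqrt N$ stitching, it is recovered only when $N$ is large compared to $\alpha(1-\alpha)^{-1/\alpha}$.

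The loss comes from applying the per-index bound at every tail index at once. What is needed is the single-threshold argument behind the Verstraete--Cirac estimate, which the paper simply cites as Lemma~\ref{lem:majorize}(a). (The Schuch et al.\ bound is the $\alpha>1$ one used for Theorem~\ref{thm:micro}.)

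Set $x=p_\chi$ and $\varepsilon=\sum_{i>\chi}p_i$. Then $\sum_{i\le\chi}p_i^\alpha\ge\chi x^\alpha$, and because $p_i\le x$ and $\alpha-1<0$, also $\sum_{i>\chi}p_i^\alpha\ge\varepsilon x^{\alpha-1}$. Minimizing $\chi x^\alpha+\varepsilon x^{\alpha-1}$ over $x>0$ gives $e^{(1-\alpha)E^{(\alpha)}_A}\ge\varepsilon^\alpha\chi^{1-\alpha}\alpha^{-\alpha}(1-\alpha)^{\alpha-1}$. Equivalently, $\varepsilon\le\alpha\exp\bigl(\frac{1-\alpha}{\alpha}(E^{(\alpha)}_A-\log\frac{\chi}{1-\alpha})\bigr)$, which is the required tail bound with an extra factor $\alpha\le1$. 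With this in place, your Steps 1--3 assemble into the theorem exactly as in the paper.
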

\begin{hproof}
    We first upper-bound the average $\overline{| \tr[ \Delta O X \rho ] | }$ by the normalized Hilbert-Schmidt distance, $D^{-1/2} {\| \Delta O \|_2}$, using a series of elementary inequalities. Then, we show that $\| \Delta O \|_2$ is bound from above by the sum of errors from a single-cut truncation across each bond of the MPO, $\| \Delta O^{(n)}\|_2$, through a series of triangle inequalities and via the contractivity of the Hilbert-Schmidt norm under a single truncation (which is an orthogonal projection). We recall from Eq.~\eqref{eq:lower} that the two-norm distance for a single truncation is: $D^{-1/2}\|\Delta O\|_2=
        ({\sum_{i=\chi+1}^{D_A^2} \lambda_i^2 })^{1/2}$. We can therefore apply an analogue of the tail-sum result used in Thm.~\ref{thm:micro} to relate $\sum_{i=\chi+1}^{D_A^2} \lambda_i^2$ to the $\alpha <1$ LOE for each cut, and finally we `stitch-together' the approximations for each cut through a sequence of triangle inequalities. The full proof uses results from Refs.~\cite{Verstraete2006,Schuch2020,schuster2024polynomialtime} and can be found in App.~\ref{ap:thm2}.
\end{hproof}
Assuming that $E^{(\alpha)}_A(O_N) \leq c \log(N)$ for $\alpha <1$ and $c>0$ for any bipartition, this result implies that $|\tr[\Delta O X \rho ]| \leq \varepsilon$ is satisfied on-average whenever
\begin{equation}
    \chi \geq N^c(1-\alpha)\left( {b N^2}{\varepsilon^{-2}} \right)^{\frac{\alpha}{1-\alpha}}.\label{eq:itac2}
\end{equation}
Therefore, for $E^{(\alpha)}_A(O_N) =\mc{O}(\log(N))$, $O_N$ is efficiently approximable as an MPO on average when $b  = \mc{O} (\mathrm{poly}(N)) $, as summarized in Table~\ref{tab:results}.

Note that only the reduced state $\rho$ on the support of $O_N$ should be considered: if the relevant state is highly entangled between the support of $O_N$ and its complement (i.e. the region outside its light-cone for a time-evolving local operator), then this is equivalent to computing expectation values over a highly mixed state and so Thm.~\ref{thm:itac} is relevant in this case; cf. Ref.~\cite{xu2026classicalsimulationnoiselessquantum}. 

% \textit{Examples.---} 
\section{Examples}
Before moving on to generalizations for higher-order correlators, we consider some concrete examples. To compare Thms.~\ref{thm:micro} and \ref{thm:itac}, consider the Hermitian, bounded, and normalized $O_N :=\id - 2  \sum_{i=1}^r |\psi_i\rangle \! \langle \psi_i |$, where we take $\ket{\psi_i}$ to be product for $i \leq \chi+2$ and generic (highly entangled) for $i> \chi+2$. If we truncate to rank $\chi$, we find that $\| \Delta O \|_\infty =1$, but $D^{-1/2} \| O_N \|_2 \approx (r/D)^{1/2} = \mc{O}(N^{-1})$ for a suitably chosen $r$. Such an operator is therefore efficiently approximable as an MPO for correlations over low-average ensembles, but not according to all possible states.    

There are several physical systems where the logarithmic scaling of LOE is known exactly, and so for which Thm.~\ref{thm:itac} directly applies. In these models, time serves as a proxy for $N$: an initially local operator evolving according to a local circuit or Hamiltonian leads to a family of (time-evolved) operators $O_N$ with support on $N\approx 2t$ sites from its light-cone. Examples include the integrable models:  
(i) free Fermion hopping dynamics~\cite{Prosen2007a,Hartmann2009,Dubail_2017}, (ii) dual unitary XXZ circuit model~\cite{Kos2020II}, and (iii) the Rule 54 circuit model~\cite{Alba2019,jacoby2025longtime}. 
It is conjectured that the characteristic logarithmic growth of LOE persists in all locally-interacting integrable models~\cite{Prosen2007}, with numerical evidence further supporting this~\cite{Prosen2007,Prosen2009,Alba2021,Alba2025-mx}. Our results, therefore, establish a precise relation between many-body chaos and classical simulability. In fact, this relationship goes beyond Bethe-ansatz integrability: Clifford circuits with only $\mc{O}(\log(N))$ non-Clifford gates (`magic' resources) also lead to a logarithmically bounded LOE~\cite{Dowlin2024LOE-OSRE}.

\section{Higher-order Correlators}
% \textit{Higher-order Correlators.---} 
We now consider the approximability of non-linear properties of $O_N$. Higher-order OTOCs are defined for $k \geq 2$ as 
\begin{equation}
    \mathrm{OTOC}^{(k)}(O,X):=D^{-1}\tr[(OX)^k].\label{eq:otocDef}
\end{equation}
The case of $k=2$ has been extensively studied in the context of information scrambling~\cite{Shenker_Stanford_2014,Nahum2018,Swingle2018,swingle2020,dowling2023scrambling}, while $k>2$ OTOCs are recently of interest regarding their relation to 
% Free Probability and 
the foundations of quantum statistical mechanics~\cite{Foini2019,Pappalardi2022freeETH,fritzsch2025freep,vallini2025refinementseig}, unitary designs~\cite{Roberts2017-en,fava2023designsfreeprobability,dowling2025freeindep}, and quantum supremacy experiments~\cite{Mi2021,Abanin2025}. We now extend Thm.~\ref{thm:itac} to OTOCs.
\begin{thm} \label{thm:otoc}
     Consider an operator $O_N$ and any MPO approximation $\tilde{O}_{N,\chi}$. Then for $\alpha<1$,
    \begin{align}
        | \Delta \mathrm{OTOC}^{(k)} |\leq &N \left(  \frac{\|\tilde{O}_{N,\chi}\|_\infty^{k-1}-1}{\|\tilde{O}_{N,\chi}\|_\infty-1} + 1 \right) \nn \\
        &\times \exp\left( \frac{1-\alpha}{2\alpha}\Big( E^{(\alpha)} - \log\big( \frac{\chi}{1-\alpha} \big) \Big) \right),  \nn
    \end{align}
        where $E^{(\alpha)}:=\max_{A} \{ E^{(\alpha)}_A(O_N)\}$, and
        $\Delta \mathrm{OTOC}^{(k)} :=  \mathrm{OTOC}^{(k)}(O_N,X )  - \mathrm{OTOC}^{(k)}(\tilde{O}_{N,\chi},X )$ for some bounded and normalized $X$. 
    \end{thm}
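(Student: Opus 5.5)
The plan is to reduce the OTOC difference to a sum of terms, each linear in $\Delta O := O_N-\tilde{O}_{N,\chi}$. Each such term can then be bounded by the normalized Hilbert–Schmidt error $D^{-1/2}\|\Delta O\|_2$. That error in turn is controlled exactly as in the proof of Thm.~\ref{thm:itac}.

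\textbf{Step 1: telescoping.} Write $A=O_NX$ and $\tilde A=\tilde{O}_{N,\chi}X$. Then
\begin{equation}
A^k-\tilde A^k=\sum_{j=0}^{k-1}\tilde A^{\,j}(A-\tilde A)A^{k-1-j},
\end{equation}
with $A-\tilde A=\Delta O\, X$. Taking the normalized trace gives
\begin{equation}
|\Delta\mathrm{OTOC}^{(k)}|\le \sum_{j=0}^{k-1} D^{-1}\big|\tr[\Delta O\, X A^{k-1-j}\tilde A^{\,j}]\big| .
\end{equation}

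\textbf{Step 2: Cauchy–Schwarz on each term.} For each $j$, apply $D^{-1}|\tr[\Delta O\, Y_j]|\le D^{-1/2}\|\Delta O\|_2\, D^{-1/2}\|Y_j\|_2$, where $Y_j=XA^{k-1-j}\tilde A^{\,j}$. Then use $\|Y_j\|_2\le \|Y_j\|_\infty\sqrt{D}$ together with submultiplicativity of the spectral norm. Since $\|X\|_\infty,\|O_N\|_\infty\le1$, this gives $D^{-1/2}\|Y_j\|_2\le \|\tilde{O}_{N,\chi}\|_\infty^{\,j}$.

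Summing over $j=0,\dots,k-1$ produces the geometric factor $\sum_{j=0}^{k-1}\|\tilde{O}_{N,\chi}\|_\infty^{j}$. I would like this to match the stated prefactor $\frac{\|\tilde O\|_\infty^{k-1}-1}{\|\tilde O\|_\infty-1}+1$, which equals $1+\sum_{j=0}^{k-2}\|\tilde O\|_\infty^{j}$. The last term $j=k-1$ would then have to be bounded with weight $1$ rather than $\|\tilde O\|_\infty^{k-1}$. I would achieve this by ordering the telescoping the other way for that term, or by using cyclicity of the trace to place as many factors of $O_N$ as possible next to $\Delta O$. I expect this bookkeeping to need care, but it is not essential to the argument. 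If it fails, a harmless variant of the prefactor results.

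\textbf{Step 3: HS error versus LOE.} Here I reuse the core of the proof of Thm.~\ref{thm:itac}. First bound $\|\Delta O\|_2\le\sum_{n=1}^{N-1}\|\Delta O^{(n)}\|_2$ by successive truncations, using that each truncation is an orthogonal projection and hence contractive in the Hilbert–Schmidt norm. Each single-cut tail $D^{-1/2}\|\Delta O^{(n)}\|_2=(\sum_{i>\chi}\lambda_i^2)^{1/2}$ is then bounded by the $\alpha<1$ Rényi tail-sum inequality of Refs.~\cite{Verstraete2006,Schuch2020}:
\begin{equation}
\sum_{i>\chi}\lambda_i^2\le \exp\Big(\tfrac{1-\alpha}{\alpha}\big(E^{(\alpha)}-\log\tfrac{\chi}{1-\alpha}\big)\Big).
\end{equation}
Taking the square root and summing over at most $N$ cuts yields the factor $N\exp(\cdots)$.

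The main obstacle is the spectral norm of $\tilde{O}_{N,\chi}$. Truncation does not preserve $\|\cdot\|_\infty\le 1$, which is why the bound is stated in terms of $\|\tilde{O}_{N,\chi}\|_\infty$ rather than being made uniform. The plan is therefore to never bound it, and simply carry it through Step 2.
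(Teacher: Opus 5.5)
\textbf{Gap in Step 2: the prefactor.} Your Steps 1 and 3 are sound and match the paper: telescoping into $k$ terms linear in $\Delta O$, Cauchy--Schwarz, then the successive-projection and R\'enyi tail-sum argument of Thm.~\ref{thm:itac}. The problem is Step 2. Bounding every factor of $\tilde{O}_{N,\chi}$ in spectral norm gives the prefactor $\sum_{j=0}^{k-1}\|\tilde{O}_{N,\chi}\|_\infty^{j}$. This exceeds the stated prefactor by $\|\tilde{O}_{N,\chi}\|_\infty^{k-1}-1$ whenever $\|\tilde{O}_{N,\chi}\|_\infty>1$, which is exactly the regime of interest.

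Your proposed repairs cannot close this. Whatever order you telescope in, and however you cycle the trace, one of the $k$ terms contains $k-1$ copies of $\tilde{O}_{N,\chi}$ and no copy of $O_N$. Pure operator-norm estimates on that term always produce the weight $\|\tilde{O}_{N,\chi}\|_\infty^{k-1}$. The discrepancy is also not harmless. At $k=2$ you get $1+\|\tilde{O}_{N,\chi}\|_\infty$ instead of $2$, and $\|\tilde{O}_{N,\chi}\|_\infty$ can be exponentially large. You therefore lose the unconditional efficient approximability of the four-point OTOC, which the paper derives from this theorem.

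\textbf{The missing idea.} Spend the factor $\sqrt{D}$ differently. In each term with $j\geq 1$ copies of $\tilde{O}_{N,\chi}$, put exactly one copy in Hilbert--Schmidt norm and everything else in spectral norm. With your $A=O_NX$, $\tilde{A}=\tilde{O}_{N,\chi}X$ and $Y_j=XA^{k-1-j}\tilde{A}^{j}$, this gives
\begin{equation}
\|Y_j\|_2 \leq \|X A^{k-1-j}\tilde{A}^{j-1}\|_\infty \, \|\tilde{O}_{N,\chi}\|_2 \, \|X\|_\infty \leq \|\tilde{O}_{N,\chi}\|_\infty^{j-1}\,\|\tilde{O}_{N,\chi}\|_2 .
\end{equation}
Then use $\|\tilde{O}_{N,\chi}\|_2\leq \|O_N\|_2=\sqrt{D}$. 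This holds because $\tilde{O}_{N,\chi}=P_1(\cdots P_{N-1}(O_N)\cdots)$ is built from orthogonal projections, each contractive in Hilbert--Schmidt norm; it is the same fact and the same MPO construction you already rely on in Step 3.

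Only the $j=0$ term contains no $\tilde{O}_{N,\chi}$. For it, use $\|Y_0\|_2\leq\|\id\|_2=\sqrt{D}$, giving weight $1$. Summing over $j$ gives $1+\sum_{m=0}^{k-2}\|\tilde{O}_{N,\chi}\|_\infty^{m}$, which is exactly the stated prefactor. This is how the paper obtains it.
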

    \begin{hproof}
    We first identify that the OTOC difference can be written as a sum of $k$ terms, each of which is linear in $\Delta O$, 
    \begin{equation}
        \frac{1}{D}| \sum_{j=0}^{k-1} \tr[ (O_N X)^j (O_N-\tilde{O}_{N,\chi}) X (\tilde{O}_{N,\chi} X)^{k-1-j} ] |.
    \end{equation}
    Then, we apply the triangle inequality, the trace H\"older inequality, and Schatten norm inequalities to isolate the error $\| \Delta O \|_2/\sqrt{D}$. The proof then proceeds as in Thm.~\ref{thm:itac}, with the full details to be found in App.~\ref{ap:otoc}.
\end{hproof}
While we have considered the most common OTOCs over infinite temperature states in Eq.~\eqref{eq:otocDef}, it is immediate to extend Thm.~\ref{thm:otoc} also to low-average ensembles. The upper-bound is almost identical to Thm.~\ref{thm:itac}, up to the coefficient which depends on $\| \tilde{O}_{N,\chi}\|_\infty$. For $k=2$, the coefficient reduces to $2$ and so the conclusions of Eq.~\eqref{eq:itac2} hold: $4$-point OTOCs can be efficiently approximated with an MPO, given $\mc{O}(\log(N))$ scaling of the $\alpha < 1 $ LOE (completing the proof of Table~\ref{tab:results}). We can therefore rigorously confirm the claims of Ref.~\cite{swingle2020}: logarithmic LOE resultant from an operator's light-cone at short times means efficient MPO approximability of OTOC$^{(2)}$. Note also that $\alpha\geq 2$ LOE entropies tightly bound OTOC$^{(2)}(O_N,X)$ when $X$ is chosen randomly~\cite{dowling2023scrambling,Dowling2025thesis}.

For higher-order OTOCs, to ensure an error of $\varepsilon$, it is sufficient that,
\begin{equation}
    \chi \geq (1-\alpha) \left({2 N\|\tilde{O}_{N,\chi} \|_\infty^{k-2}}{\varepsilon^{-1}} \right)^{\frac{2\alpha}{1-\alpha}} \exp \big( {E^{(\alpha)}(O_N)}\big),
\end{equation}
where we have assumed that $\|\tilde{O}_{N,\chi} \|_\infty \geq 1$. We see that as long as $\|\tilde{O}_{N,\chi} \|_\infty \sim \mathrm{poly}(N)$, then higher-order OTOCs are efficiently approximable using an MPO when $E^{(\alpha)}(O_N) = \mc{O}(\log(N))$, according to Def.~\ref{def:sim}. However, $\| \tilde{O}_{N,\chi} \|_\infty$ can be exponentially large in the worst-case, and is related to Eq.~\eqref{eq:upperHold} via the bound: $\|\tilde{O} \|_\infty \leq \|O \|_\infty + \|\Delta O \|_\infty \leq 1 + \|\Delta O \|_\infty$. We will therefore now study its typical behavior for physical models. 

\begin{figure}[t]
    \centering
    \includegraphics[width=\linewidth]{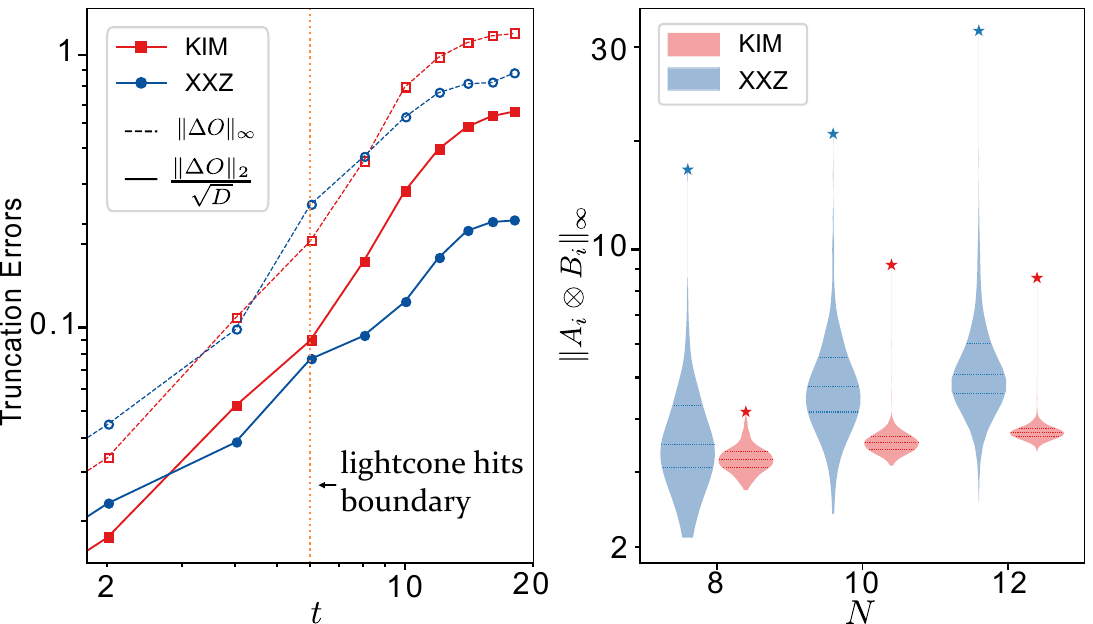}
    \caption{Numerical results on 1D brickwork dynamics of an initially local $\sigma_z$ on the center site, for the XXZ model [blue, circles] and the kicked Ising model (KIM) [red, squares]. On the left, for $N=12$ we plot the spectral [solid] and the Hilbert-Schmidt norm [dotted] of $\Delta O = O_N - \tilde{O}_{N,\chi}^{(N/2)}$, given a half-chain truncation with a cutoff of $\varepsilon = 0.02$. On the right, we plot the distribution of spectral norms of the matrices $A_i \otimes B_i$ from the corresponding operator Schmidt decomposition, after $t=20$ layers of evolution. The maximum values are shown as a star, and the quartiles are marked.
    }
    \label{fig:numerics}
\end{figure}

% \textit{Numerical results.---}
\section{Numerical results}
We return our original problem of the distinct truncation errors $\|\Delta O \|_\infty$ versus $D^{-1/2}\|\Delta O \|_2$, bounding non-equilibrium and low-average expectation values, respectively.
We study these quantities numerically in two spin chain models; see Fig.~\ref{fig:numerics} and with further details in App.~\ref{ap:numerics}. We consider an initial $\sigma_z$ operator on the $N/2$ site of an $N$-qubit spin chain, Heisenberg evolving according to a brickwork circuit. In each layer of evolution, $2$-qubit gates act on alternating next-neighbor sites: $(i,i+1),(i+2,i+3),\dots$, then $(i-1,i),(i+1,i+2),\dots$, and so on. For the two-qubit gates, we choose the (interacting integrable) XXZ model
\begin{equation}
    U_{\mathrm{XXZ}}^{(2)}  = \exp\big( -i J (  \sigma_x \otimes \sigma_x + \sigma_y \otimes \sigma_y  + \Delta \sigma_z \otimes \sigma_z) )\big) \label{eq:xxz}
\end{equation}
with $J=1$, $\Delta = 0.55$, and the (non-integrable) kicked Ising model (KIM), 
\begin{equation}
    U_{\mathrm{KIM}}^{(2)} = \exp\big( -i ( J \sigma_z \otimes \sigma_z + \sum_{a=x,z} h_a( \id \ot \sigma_a +  \sigma_a \ot \id)) \big), \label{eq:kim}
\end{equation}
with $J=1,\, h_x=0.9045,\, h_z=0.8090 $. Choosing a half-chain bipartition, in the left panel of Fig.~\ref{fig:numerics} for $N=12$ we plot the two truncation errors for a sharp cutoff of $\lambda_{\chi} \geq 0.02 > \lambda_{\chi+1}$ in the operator Schmidt decomposition, Eq.~\eqref{eq:schmidt}. We find that $\|\Delta O \|_\infty$ is not significantly larger than $D^{1/2}\|\Delta O \|_2$, despite what is allowable from the worst-case bound. Interestingly, the two truncation errors trend similarly as a power-law with time, with their difference remaining relatively constant.

In the right panel of Fig.~\ref{fig:numerics}, for the same models we compute the distribution of $\| A_i \otimes B_i\|_\infty$ from Eq.~\eqref{eq:schmidt}, after $t=20$ layers with varying $N$. Both $\|\Delta O \|_2$ and the LOE entropies are independent of the structure of $A_i \otimes B_i$, so any varying behavior of $\|\Delta O \|_\infty$ must be encoded therein. 
We see that while the maximum value tends to (almost) saturate the upper bound of $\| A_i \otimes B_i\|_\infty \leq \| A_i \otimes B_i\|_2 = \sqrt{D}$, the distribution is highly concentrated around a value that grows slowly with $N$. We also see a qualitative difference between the integrable and the non-integrable models, with the KIM leading to a more concentrated distribution. It is an interesting open question whether this behavior is a universal characteristic of (non-)integrability.
% In App.~\ref{ap:numerics}, we also plot the distribution for the integrable transverse field Ising model ($h_z=0$ in Eq.~\eqref{eq:kim}), finding values highly concentrated around $\| A_i \otimes B_i\|_\infty=1$.

\section{Random Matrix Model}
% \textit{Random Matrix Model.---}
Motivated by the results of Fig.~\ref{fig:numerics}, we now introduce a random matrix model for the components $A_i \otimes B_i $ appearing in the operator Schmidt decomposition, Eq.~\eqref{eq:schmidt}. Namely, for a given Schmidt spectrum $\{\lambda_i \}$, we replace $A_i$ and $B_i$ with an ensemble of bounded matrices, which, remarkably, allows us to bound the spectral truncation error in terms of LOE entropies.
\begin{thm} \label{thm:random}
    Consider an ensemble of operators such that each element can be written as $O_N = \sum_{i=1}^{d^{2n}} \lambda_i A_i \otimes B_i$ across the bipartition $\mc{H}=\mc{H}_A\otimes \mc{H}_B$ with $N_A =n$. We take the set $\{\lambda_i\}$ to be constant and $A_i \otimes B_i$ sampled independently such that, almost surely: $\mathbb{E} (X)  =0$, $\|X\|_\infty \le L$, and $\mathbb{E} ( X^\dagger  X ) = \mathbb{E} ( X X^\dagger ) = \id$ for $X \in \{A_i,B_i\}$. Then, on average for $\alpha <1$,
    \begin{equation}
        \mathbb{E}(\| O_N - \tilde{O}_{N,\chi}^{(n)}\|_\infty)\! \leq \!c N \exp \Bigg( \!\frac{1}{2\alpha} \!\Big(\!(1-\alpha) E^{(\alpha)}_A - \log({\chi}) \!\Big) \!\Bigg) \nn
        % \label{eq:rmt}
    \end{equation}
    where $\tilde{O}_{N,\chi}^{(n)} := \sum_{i=1}^{\chi} \lambda_i A_i \otimes B_i$ and $c$ is a constant depending on $d,\, \alpha,$ and $L$, which is reported in Eq.~\eqref{eq:const}.
\end{thm}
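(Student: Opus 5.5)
The error is $\Delta := O_N - \tilde{O}_{N,\chi}^{(n)} = \sum_{i>\chi} \lambda_i A_i \otimes B_i$, a sum of independent, zero-mean, bounded random matrices with deterministic coefficients. The plan is to bound its expected spectral norm with a matrix concentration inequality, namely the expectation form of matrix Bernstein or the noncommutative Khintchine inequality. The result is controlled by the tail sum $\sum_{i>\chi}\lambda_i^2$, which we then relate to the $\alpha<1$ LOE using the same tail-sum lemma already used in Thm.~\ref{thm:itac}.

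\textbf{Step 1: moments of each summand.} Set $Y_i := \lambda_i A_i\otimes B_i$. We need $A_i$ and $B_i$ to be independent of each other as well as across $i$; I read ``sampled independently'' this way. Then $\mathbb{E}(Y_i) = \lambda_i\,\mathbb{E}(A_i)\otimes\mathbb{E}(B_i) = 0$ and $\|Y_i\|_\infty \le |\lambda_i| L^2 \le L^2$. For the variance,
\begin{equation}
\mathbb{E}(Y_i^\dagger Y_i) = \lambda_i^2\, \mathbb{E}(A_i^\dagger A_i)\otimes \mathbb{E}(B_i^\dagger B_i) = \lambda_i^2 \id ,
\end{equation}
and $\mathbb{E}(Y_iY_i^\dagger)=\lambda_i^2\id$ in the same way. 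The matrix variance parameter is therefore
\begin{equation}
\sigma^2 = \Big\| \sum_{i>\chi} \mathbb{E}(Y_i^\dagger Y_i)\Big\|_\infty = \sum_{i>\chi}\lambda_i^2 .
\end{equation}

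\textbf{Step 2: matrix Bernstein.} Apply the rectangular (Hermitian-dilation) matrix Bernstein expectation bound in dimension $D=d^N$:
\begin{equation}
\mathbb{E}\|\Delta\|_\infty \le \sqrt{2\sigma^2 \log(2D)} + \tfrac{1}{3} L^2 \log(2D).
\end{equation}
Here $\log(2D) = N\log d + \log 2$, which supplies the factor of $N$. The difficulty is the additive $L^2\log D$ term: it does not shrink with the tail, so it would ruin the bound. I would first try to remove it by using the per-summand bound $\|Y_i\|_\infty\le \lambda_{\chi+1}L^2$, which gives $R=\lambda_{\chi+1}L^2$. Since the $\lambda_i$ are ordered, $\lambda_{\chi+1}^2 \le \chi^{-1}\sum_{i\le\chi+1}\lambda_i^2 \le 1/\chi$, so $R$ is also small. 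If that proves awkward, the fallback is the noncommutative Khintchine or Rosenthal-type bound for independent summands, $\mathbb{E}\|\Delta\|\lesssim \sigma\sqrt{\log D} + R\log D$, which has the same structure.

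\textbf{Step 3: converting to LOE.} For $\alpha<1$, the tail lemma of Refs.~\cite{Verstraete2006,Schuch_2008}, as used in Thm.~\ref{thm:itac}, gives
\begin{equation}
\sum_{i>\chi}\lambda_i^2 \le \exp\Big(\tfrac{1-\alpha}{\alpha}\big(E_A^{(\alpha)} - \log\tfrac{\chi}{1-\alpha}\big)\Big).
\end{equation}
Taking the square root handles the $\sigma$ term. For $\lambda_{\chi+1}$, note that $\chi\lambda_{\chi+1}^{2\alpha} \le \sum_i \lambda_i^{2\alpha} = e^{(1-\alpha)E^{(\alpha)}_A}$, so
\begin{equation}
\lambda_{\chi+1} \le \exp\Big(\tfrac{1}{2\alpha}\big((1-\alpha)E^{(\alpha)}_A - \log\chi\big)\Big).
\end{equation}
This is exactly the exponent in the statement. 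The $\sigma$ term has the larger exponent $\tfrac{1-\alpha}{2\alpha}\cdot\log\chi$ on $\chi$, but it is bounded by the same quantity up to the factor $(1-\alpha)^{-(1-\alpha)/(2\alpha)}$, because $\tfrac{1-\alpha}{2\alpha}(E-\log\chi) \le \tfrac{1}{2\alpha}((1-\alpha)E - \log\chi)$ fails only when $\log\chi<0$, which is impossible.

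Collecting the terms gives the stated bound, with $c$ depending on $d$, $\alpha$ and $L$ (through $\sqrt{\log d}$, $L^2\log d$ and $(1-\alpha)$-factors). I expect the main obstacle to be the uniform-norm term in Bernstein. I will rely on the ordering of the $\lambda_i$ so that it scales with $\lambda_{\chi+1}$ rather than with a constant.
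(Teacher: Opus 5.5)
Your argument follows the paper's route: matrix Bernstein with the per-summand bound $R=\lambda_{\chi+1}L^2$ and variance equal to the tail sum $\sum_{i>\chi}\lambda_i^2$, then Lemma~\ref{lem:majorize}a for the variance term and a separate bound on $\lambda_{\chi+1}$. Your counting argument $\chi\lambda_{\chi+1}^{2\alpha}\le\sum_i\lambda_i^{2\alpha}$ is a cleaner proof of Lemma~\ref{lem:majorize}c. Steps 1--3 are fine up to cosmetics. The $L^2$ versus $L$ difference is absorbed into $c$, and the factor from Lemma~\ref{lem:majorize}a is $(1-\alpha)^{+(1-\alpha)/(2\alpha)}\le 1$, not its inverse.

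The gap is the final comparison of exponents, which runs the wrong way. With $E=E^{(\alpha)}_A$,
\[
\frac{1}{2\alpha}\bigl((1-\alpha)E-\log\chi\bigr)=\frac{1-\alpha}{2\alpha}\bigl(E-\log\chi\bigr)-\frac{1}{2}\log\chi ,
\]
so the inequality you invoke holds only when $\chi\le 1$. For every $\chi>1$ the Bernstein variance term exceeds the claimed right-hand side by a factor of order $\sqrt{\chi}$. Only the $\lambda_{\chi+1}$ term decays at the claimed rate.

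No sharper concentration inequality can close this, because the stated exponent is false. Note that $\|\Delta O\|_\infty\ge D^{-1/2}\|\Delta O\|_2$ always. Take the following ensemble:
\begin{itemize}
\item $d=2$, with distinct Pauli strings $P_i$ on $A$ and $Q_i$ on $B$;
\item $A_i=\epsilon_iP_i$ and $B_i=\epsilon'_iQ_i$ with independent uniform signs, so all hypotheses hold with $L=1$;
\item the flat spectrum $\lambda_i^2=1/(2\chi)$ for $i\le 2\chi$.
\end{itemize}
Every sample then has $D^{-1/2}\|\Delta O\|_2=2^{-1/2}$, so $\mathbb{E}\|\Delta O\|_\infty\ge 2^{-1/2}$. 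Meanwhile $E^{(\alpha)}_A=\log(2\chi)$ turns the claimed bound into $cN\,2^{(1-\alpha)/(2\alpha)}\chi^{-1/2}$. This is smaller than $2^{-1/2}$ once $\chi\gg N^2$, which is allowed since $\chi$ can be as large as $d^{2n}/2$.

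The paper's proof has the same slip in its last step. There ``$1-\alpha\le1$'' is used to replace $\chi^{-(1-\alpha)/(2\alpha)}$ by the smaller $\chi^{-1/(2\alpha)}$, which is invalid in an upper bound. What your argument and the paper's actually establish is
\[
\mathbb{E}\bigl(\|O_N-\tilde{O}^{(n)}_{N,\chi}\|_\infty\bigr)\le c\,N\exp\Bigl(\frac{1-\alpha}{2\alpha}\bigl(E^{(\alpha)}_A-\log\chi\bigr)\Bigr),
\]
with $c$ as in Eq.~\eqref{eq:const}. Here $\chi\ge1$ is used to dominate the $\lambda_{\chi+1}$ term by the same exponential. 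This is the same $\chi$-dependence as Thm.~\ref{thm:itac}. It still gives polynomial $\chi$ when $E^{(\alpha)}_A=\mc{O}(\log N)$, so the qualitative conclusion survives. The stated exponent $\frac{1}{2\alpha}\bigl((1-\alpha)E^{(\alpha)}_A-\log\chi\bigr)$ does not.
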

\begin{hproof}
    The proof follows from first applying the matrix Bernstein inequality to $\Delta O$~\cite{tropp2015introductionmatrixconcentration}. Then, the resultant upper bound involves both the Hilbert-Schmidt truncation error $\sum_{i=\chi+1}^{D_A^2} |\lambda_i|^2$ as well as a term proportional to $|\lambda_{\chi+1}|$. The former we can upper-bound using the argument of Thm.~\ref{thm:itac}, while the latter can be bound by LOE through a similar majorization argument. A full proof is supplied in App.~\ref{ap:random}
\end{hproof}
 The rationale of this ensemble is that by only randomizing the matrices $A_i$ and $B_i$, the coefficients $\lambda_i$---and hence the LOE entropies of $O_N$---remain constant. An example is $A_i \to U^{\dagger} A_i U$ and $B_i \to V^{\dagger} B_i V$ for a Haar random $U,V$ and $\| A_i\|_\infty, \| B_i\|_\infty \leq L$. Then Thm.~\ref{thm:random} together with Eq.~\eqref{eq:upperHold} means that for $E^{(\alpha)}(O_N) = \mc{O}(\log(N))$, on-average $O_N$ can be truncated to a polynomial Schmidt rank, while accurately reproducing expectation values for any state. We stress that this is a toy model; it is an open question to determine relevant dynamics that would lead to such an operator ensemble for every bipartition.

\section{Discussion}
% \textit{Discussion.---} 
We have formalized the intuition that LOE scaling governs the simulability of Heisenberg operators as MPOs. 
An immediate question is, to what degree do our results hold for density matrices? Mixed states with area-law operator entanglement may not be well-approximated by a matrix product density operator (MPDO). This stems from the fact that the trace norm is the natural distance measure between states, bounding any expectation value error (cf. Eq.~\eqref{eq:upperHold}). However, the Hilbert-Schmidt distance is that which can be directly related to R\'enyi entropies~\cite{Verstraete2006,Schuch_2008}, and these two metrics coincide only for pure states~\cite{wilde_2017}. The entanglement of purification is the more relevant measure when examining representability as MPDOs~\cite{Schuch2020}.

Regarding LOE in many-body physics, questions remain about the extent to which the slow growth of LOE in integrable systems persists in higher dimensions or in other phases. For instance, many-body localized dynamics are thought to have a logarithmic light-cone~\cite{Abanin2019,Zhou2017,kim2014localintegralsmotionlogarithmic}, which should suppress LOE growth~\cite{pineda2026operatordel}. In addition, depolarizing noise dampens high-weight Pauli components of Heisenberg operators~\cite{Rakovszky2022,schuster2024polynomialtime}, and this phenomenon will clearly also inhibit LOE growth~\cite{dowling2024magicheis,Dowlin2024LOE-OSRE}.

One should also compare H-DMRG techniques to other time evolution methods in the Heisenberg picture, such as Pauli propagation~\cite{Rakovszky2022,begusic2024realtime,schuster2024polynomialtime,angrisani2024classically,dowling2024magicheis,rudolph2025paulipropag}. This method involves the truncation of Pauli strings with low amplitude (or weight) in the operator decomposition in this basis, cf. truncating operator Schmidt components. The operator Stabilizer entropy (OSE), a measure of magic resources~\cite{dowling2024magicheis}, plays an equivalent role to the LOE in bounding the cost of Pauli Propagation~\cite{dowling2024magicheis,shao2025pauliprop}, and Thms.~\ref{thm:micro}-\ref{thm:random} can be extended to this setting~\cite{dowlingOSEinprep}. The OSE bounds the LOE~\cite{Dowlin2024LOE-OSRE}, which means that in the scaling limit, H-DMRG should always perform at least as well as Pauli propagation. 
A caveat is that discarding Pauli strings is, computationally, significantly easier to do in practice than the singular value decomposition involved in MPO truncation. A promising direction is then to develop new methods by amalgamating principles behind both Pauli Propagation and H-DMRG; for instance, by truncating in a way that minimizes the spectral norm error.
We leave this to future work.

\twocolumngrid
\begin{acknowledgments}
    The author thanks Matteo Rizzi and Norbert Schuch for useful discussions, and Pavel Kos and Gregory White
    for useful discussions and helpful comments on the manuscript. The author acknowledges funding by the Deutsche Forschungsgemeinschaft (DFG, German Research Foundation) under Germany’s Excellence Strategy - Cluster of Excellence Matter and Light for Quantum Computing (ML4Q) EXC 2004/1 - 390534769. 
\end{acknowledgments}

\makeatletter
\renewenvironment{thebibliography}[1]
     {\section*{\refname}%
      \small
      \list{\@biblabel{\@arabic\c@enumiv}}%
           {\settowidth\labelwidth{\@biblabel{#1}}%
            \leftmargin\labelwidth
            \advance\leftmargin\labelsep
            \setlength{\itemsep}{0pt}
            \setlength{\parsep}{0pt}
            \usecounter{enumiv}%
            \let\p@enumiv\@empty
            \renewcommand\theenumiv{\@arabic\c@enumiv}}%
      \sloppy\clubpenalty4000\widowpenalty4000%
      \sfcode`\.\@m}
     {\def\@noitemerr
       {\@latex@warning{Empty `thebibliography' environment}}%
      \endlist}
\makeatother

\onecolumngrid
\bibliographystyle{quantum}
% \bibliography{references}
% \printbibliography

% \newpage

\onecolumngrid
% \appendix

\newpage
% \section*{End Matter}
% \twocolumngrid

\onecolumngrid
\section*{Appendix}

\appendix
% \newpage
% switch to a different TOC file
\makeatletter
\newcommand{\appendixtableofcontents}{%
  \subsection*{Contents}
  \@starttoc{atoc}
}
\newcommand{\appendixcontentsline}[3]{%
  \addcontentsline{atoc}{#1}{#2}
}
\makeatother
% \resumetoc
\appendixtableofcontents
% \tableofcontents

\section{Proofs of Main Results} \appendixcontentsline{section}{A. Proofs of Main Results}{} \label{ap:proofs}
Here, we provide full proofs and additional details for the results presented in the main text. We first introduce the setting and notation, and recall some useful intermediate results.

We consider a (family of) operators $O_N$, defined on $N$ sites of a 1D chain with total dimension $D=d^N$,
\begin{equation}
    O_N =  \includegraphics[scale=1, valign=c]{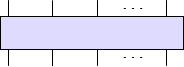},
\end{equation}
where we use standard tensor network graphical notation~\cite{ORUS2014117}, with bra [ket] indices at the bottom [top]; cf. Fig.~\ref{fig:main-results}. Its state representation is then
\begin{equation}
    |O_N\rrangle:= (O_N \otimes \id) \ket{\phi^+} =  \frac{1}{\sqrt{D}}\, \includegraphics[scale=1, valign=c]{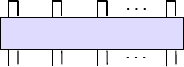}.
\end{equation}
The operator has LOE R\'enyi entropies defined by
\begin{equation}
    E^{(\alpha)}_A(O_N):= (1-\alpha)^{-1} \log( \tr[\tr_{B}[|O_N\rrangle\!\llangle O_N|]^{\alpha}] ) , \label{eq:loe_again}
    % = (1-\alpha)^{-1} \log( \tr[T_{(12)\dots ((\alpha-1) \alpha) } T_{ (2 3) \dots (\alpha 1) } O_N^{\otimes 2 \alpha}|]),
\end{equation}
which we notice are functions of $2\alpha$-copies of $O_N$~\cite{Dowlin2024LOE-OSRE}. These are defined across some bipartition of the Hilbert space $\mc{H}=\mc{H}_A \otimes \mc{H}_B$, where $\mc{H}_A$ include the left $N_A$ qudits and $\mc{H}_B$ the right $N_B=N-N_A$ qudits. Then, recalling that $|O_N\rrangle\in \mc{H}^{\otimes 2}$, the partial trace in Eq.~\eqref{eq:loe_again} is with respect to the doubled-space bipartition, $\mc{H}_A^{\otimes 2} \otimes \mc{H}_B^{\otimes 2}$,
\begin{equation}
    \tr_{B}[|O_N\rrangle\!\llangle O_N|] = \includegraphics[scale=1, valign=c]{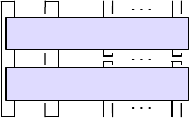}
\end{equation}
In the following proofs, we will consider two approximations to $O_N$. First, if we truncate across a single bipartition between the left $n$ qudits and the right $N-n$, we have that
\begin{equation}
    O_N 
    % \underset{{\text{local basis}}}
    {=}\sum_{i,j=1}^{d^{2 N_A},d^{2 N_B}} c_{ij} {P_i \otimes P_j} \underset{\text{Schmidt}}{=} \sum_{i=1}^{\min(d^{2 N_A},d^{2 N_B})} \lambda_i A_i \otimes B_i \underset{\mathrm{truncate}}{\to} \, \sum_{i=1}^{\chi} \lambda_i A_i \otimes B_i =: \tilde{O}_{N,\chi}^{(n)}. \label{eq:truncateOnce}
\end{equation}
with $A_i$ [$B_i$] having support on the first $N_A=n$ [final $N_B=N-n$] sites, and $P_i$ some local, orthonormal operator basis, as discussed below Eq.~\eqref{eq:mpo}. Without loss of generality, we take $N_A \leq N_B$ throughout, such that $\min(d^{2 N_A},d^{2 N_B}) = d^{2n}$. The first equality in Eq.~\eqref{eq:truncateOnce} is the rewriting of $O_N$ in a local, orthonormal basis. The next comes from the Schmidt decomposition of $| O_N \rrangle$ across the chosen bipartition $A:B$, which can be determined from the singular value decomposition of the correlation matrix $c_{ij}$. Then the set $\{ |\lambda_i|^2 \}_{i=1}^{d^{2n}}$ defines a normalized probability distribution, and its classical R\'enyi entropies are exactly the LOE R\'enyi entropies across the corresponding bipartition. Based on this fact, we recall some useful results on relating the tails of (classical) probability distributions to their entropies~\cite{Verstraete2006,Schuch_2008}.
 \begin{lemma}  \label{lem:majorize}
    Consider the ordered distribution $\{ p_i \}_{i=1}^{r}$ with $p_i \geq p_{i+1}$, R\'enyi entropies $S^{(\alpha)}(\{ p_i\}) := (1-\alpha)^{-1}\log(\sum_i p_i^{\alpha} )$, and tail sum $p := \sum_{i=\chi+1}^r p_i$. Then,
    \begin{enumerate}
        \item [\textbf{a}.] \textup{(\cite{Verstraete2006})} For $0< \alpha < 1$, 
            \begin{equation}
            p \leq \exp\left( \frac{1-\alpha}{\alpha}\big( S^{(\alpha)}(\{ p_i\}) - \log\big( \frac{\chi}{1-\alpha} \big) \Big) \right),
            \end{equation}
        \item [\textbf{b}.] \textup{(\cite{Schuch_2008})} For $\alpha > 1$, 
            \begin{equation}
                p \geq 1-\exp\left(\frac{\alpha-1}{\alpha }\big(\log(\chi) - S^{(\alpha)}(\{ p_i\}) \big) \right). 
            \end{equation}   
        \item [\textbf{c}.] For $0< \alpha < 1$,
        \begin{equation}
            \sqrt{p_{\chi +1}} \leq \exp \left( \frac{1-\alpha}{2\alpha} S^{(\alpha)}(\{ p_i\}) - \frac{\log(\chi)}{2\alpha}\right). \label{eq:a3}
        \end{equation}
    \end{enumerate}
\end{lemma}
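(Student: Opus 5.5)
Parts \textbf{a} and \textbf{b} are quoted from Refs.~\cite{Verstraete2006,Schuch_2008}, so the work lies in part \textbf{c}, together with a short reconstruction of \textbf{a} and \textbf{b} for completeness. All three parts use the same observation. For an ordered distribution, each $p_i$ with $i\le \chi+1$ is at least $p_{\chi+1}$. The head of the distribution therefore controls how large the tail can be, and any $\alpha$-power sum bounds this head from below or above.

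\textbf{Part c.} Let $0<\alpha<1$. Because the sequence is ordered, $p_i \geq p_{\chi+1}$ for all $i \leq \chi+1$. Dropping the nonnegative remaining terms then gives
\begin{equation}
    \sum_i p_i^\alpha \;\geq\; \sum_{i=1}^{\chi+1} p_i^\alpha \;\geq\; (\chi+1)\, p_{\chi+1}^\alpha \;\geq\; \chi\, p_{\chi+1}^{\alpha}.
\end{equation}
Since $1-\alpha>0$, we have $\sum_i p_i^\alpha = \exp\big((1-\alpha)S^{(\alpha)}\big)$. Rearranging yields $p_{\chi+1}\le \exp\big(\tfrac{1}{\alpha}\big((1-\alpha)S^{(\alpha)}-\log\chi\big)\big)$. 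Taking the square root gives exactly Eq.~\eqref{eq:a3}.

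\textbf{Part a.} This is the Verstraete--Cirac argument. Over all ordered distributions with tail mass $p$, the quantity $\sum_i p_i^\alpha$ is minimized by concentrating the head as much as possible and spreading the tail as evenly as possible, subject to $p_i\le p_\chi$ for $i>\chi$. Equivalently, with $t:=p_{\chi}$, one has $\sum_{i>\chi} p_i^\alpha \geq p\, t^{\alpha-1}$, because each tail element satisfies $p_i\le t$ and $\alpha-1<0$. Combined with $t\le 1/\chi$, this gives $\sum_i p_i^\alpha \ge p\,\chi^{1-\alpha}$. Refining the bound with the head contribution $\chi t^\alpha$ and optimizing over $t$ produces the factor $(1-\alpha)^{-1}$ as in Ref.~\cite{Verstraete2006}. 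Solving for $p$ then gives the stated bound.

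\textbf{Part b.} Let $\alpha>1$. Write $1-p=\sum_{i\le\chi}p_i$. By H\"older (or the power-mean inequality), $\sum_{i\le\chi}p_i \le \chi^{1-1/\alpha}\big(\sum_i p_i^\alpha\big)^{1/\alpha}$. Substituting $\sum_i p_i^\alpha=\exp\big((1-\alpha)S^{(\alpha)}\big)$ gives $1-p\le \exp\big(\tfrac{\alpha-1}{\alpha}(\log\chi - S^{(\alpha)})\big)$, which is the claim.

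The only step needing care is the optimization giving the exact constant $(1-\alpha)^{-1}$ in part \textbf{a}. Since part \textbf{a} is cited, it can be deferred to Ref.~\cite{Verstraete2006}. Part \textbf{c}, the new content, is elementary.
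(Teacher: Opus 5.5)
Your proof is correct. For part \textbf{c}, the only new content, you take a more elementary route than the paper. The paper fixes $x=p_{\chi+1}$ and argues that $\{1-\chi x,x,\dots,x,0,\dots\}$ majorizes every admissible ordered distribution. It then invokes Schur concavity of the R\'enyi entropy to get the exact minimum $S^{(\alpha)}\ge(1-\alpha)^{-1}\log\big(\chi x^\alpha+(1-\chi x)^\alpha\big)$, and immediately discards the $(1-\chi x)^\alpha$ term. You reach the same working inequality $\sum_i p_i^\alpha\ge\chi\,p_{\chi+1}^\alpha$ directly, from monotonicity of $x\mapsto x^\alpha$ and the ordering $p_i\ge p_{\chi+1}$ for $i\le\chi+1$. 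You need neither an extremal distribution nor Schur concavity. The paper's majorization step buys a tight intermediate bound, but that only matters if one keeps the head term, which it does not; your argument is shorter and self-contained. Your reconstructions of the cited parts are also sound. The H\"older argument for \textbf{b} is a complete proof. Your sketch for \textbf{a} closes in one line: the minimum of $\chi t^\alpha+p\,t^{\alpha-1}$ over $t>0$ is $\alpha^{-\alpha}p^\alpha\big(\chi/(1-\alpha)\big)^{1-\alpha}$, attained at $t=(1-\alpha)p/(\alpha\chi)$. This yields $p\le\alpha\exp\big(\tfrac{1-\alpha}{\alpha}(S^{(\alpha)}-\log\tfrac{\chi}{1-\alpha})\big)$, slightly stronger than the stated bound.
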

\begin{proof}
    The proofs of parts $\mathbf{a.}$ and $\mathbf{b.}$ are provided in Refs.~\cite{Verstraete2006} and \cite{Schuch_2008}, respectively, and rely on majorization arguments that dictate which distribution has the maximal/minimal entropies for a given tail sum. To prove part $\mathbf{c.}$, we follow a similar strategy. We fix $p_{\chi +1 }=x$ and $\chi$, and recognize that the distribution with the minimum entropy is: $\{p_1=1-\chi x,\, p_2=p_3=\dots=p_{\chi +1 }=x,\, p_{\chi+2}=\dots=0\}$. This can be understood from the fact that this distribution majorizes all others that satisfy these constraints, and then applying the Schur concavity property of R\'enyi entropies~\cite{bhatia97}. Then, we have that 
    \begin{align}
        S^{(\alpha)}(\{ p_i\}) &\geq \frac{1}{1-\alpha }\log(\chi x^{\alpha} + (1-\chi x)^\alpha ) \\
        &\geq \frac{1}{1-\alpha }\log(\chi x^{\alpha} )\\
        &=\frac{1}{1-\alpha } ( \log(\chi) + {\alpha} \log(x) ).
    \end{align}
    Here we have used that $(1-\alpha)^{-1}$ is positive for $\alpha < 1$ and that the logarithm is an increasing function. Rearranging this expression, and recalling that $x = p_{\chi +1 }$, we arrive at Eq.~\eqref{eq:a3}.    
\end{proof}
Parts \textbf{\textit{a}} and \textbf{\textit{b}} are key to foundational results in MPS simulability~\cite{Verstraete2006,Schuch_2008}, while part \textbf{\textit{c}} is a new result in a similar spirit. 

We will also, of course, consider full MPO approximations to $O_N$, defined in Eq.~\eqref{eq:mpo}, which we reproduce here for convenience,
\begin{equation}
    \begin{split}
        \tilde{O}_{N,\chi} &= \sum_{i_1,\dots,i_N=0}^{d^2} \tr[\Lambda^{(1)}_{i_1} \dots \Lambda^{(N)}_{i_N} ] P_{i_1} \otimes \dots \otimes P_{i_N}\\
        &= \includegraphics[scale=1.2, valign=c]{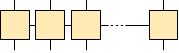} \label{eq:mpo_again}
    \end{split}
\end{equation}
Throughout, we use $\tilde{O}_{N,\chi}^{(n)}$ to indicate an approximation to $O_N$ obtained through a single rank $\chi$ truncation of $O_N$ between the first $n$ qudits and the rest, while $\tilde{O}_{N,\chi}$ is any MPO approximation to $O_N$ which has an operator Schmidt rank $\chi$ across every cut, i.e. for each $n=1,2\dots,N-1$.

% \textit{Appendix A: Proof of Theorem~\ref{thm:micro}.---}
% \tableofcontents
\subsection{Proof of Theorem~\ref{thm:micro}} \label{ap:nogo}
\appendixcontentsline{subsection}{1. Proof of Theorem~1}{}
To prove the no-go result, we need to show the negation of Def.~\ref{def:sim} for an $O_N$ which satisfies either: (i) $E^{(1)}_A(O_N) = \Omega(N)$ or (ii) $E_A^{(\alpha )}(O_N) = \Omega(N^c)$ for $\alpha > 1$ and $c>0$. The beginning of the proof proceeds the same for both cases, so we refrain from assuming either for now. We note that the first half of the proof is novel, while the second half (from below Eq.~\eqref{eq:truncation}) proceeds through a direct application of the results of Ref.~\cite{Schuch_2008}. It suffices to show that, given the LOE scaling assumption, there exists a state $\rho$ and some error $\varepsilon$, such that any MPO $\tilde{O}_{N,\chi}$ that approximates $\tr[O_N \rho]$ up to $\varepsilon$ must have bond dimension $\chi = \omega(\mathrm{poly}_{\varepsilon}(N))$. As an MPO is constructed from successive such approximations, it is enough to consider a truncation across a single bipartition, resulting in $\tilde{O}_{N,\chi}^{(n)}$ from Eq.~\eqref{eq:truncateOnce}.
Further truncations across other cuts may only worsen the error. For the result of Thm.~\ref{thm:micro} to hold, the bipartition need only be one where the LOE has the specified fast scaling. WLOG, from now we take the region $B$ to be larger than $A$ and each marginal system to consist of an extensive number of qudits: $N_B \geq N_A=n =\mc{O}(N)$ (this must be true to have a volume-law LOE scaling). 

We first notice that for Hermitian $O_N$, $\tilde{O}_{N,\chi}^{(n)}$ can also be chosen to be Hermitian, and therefore so is $\Delta O$. This can be understood through choosing a local, Hermitian operator basis for the initial decomposition on the left-hand side of Eq.~\eqref{eq:truncateOnce}. For qubits ($d=2$), one can take $P_i$ and $P_j$ to be Pauli strings on $N_A$ and $N_B$ qubits, respectively (cf. Eq.~\eqref{eq:mpo}). For qudits where $d$ is not a power of $2$, such a basis can always be found (e.g., for qutrits, one could choose the Gell-Mann basis).

Then, we can choose $\rho := |{\psi_{\lambda_{\chi+1}}} \rrangle \llangle{\psi_{\lambda_{\chi+1}}} | $ to be an eigenstate of $\Delta O^{(n)}$ corresponding to the largest eigenvalue, $\lambda_{\chi+1}$. In this case, the bound of Eq.~\eqref{eq:upperHold} becomes an equality: $|\tr[\Delta O^{(n)} \rho ]| = \| \Delta O^{(n)}\|_{\infty}$. It remains to lower-bound $\| \Delta O^{(n)}\|_{\infty}$. From a standard Schatten norm inequality, 
    \begin{equation}
        \| \Delta O^{(n)}\|_{\infty} \geq D^{-1/2} \| \Delta O^{(n)}\|_{2} = \sqrt{\sum_{i,j=\chi+1}^{d^{2n}} \lambda_i \lambda_j^* \tr[(A_i \otimes B_i)^{\dagger} (A_i \otimes B_i)] } = \sqrt{\sum_{i=\chi+1}^{d^{2n}} |\lambda_i|^2},\label{eq:trunc_error}
    \end{equation}
    where the equality on the right-hand side can be derived from the Hilbert-Schmidt orthonormalization of $A_i \otimes B_i$. We define the two (normalized) pure states 
    \begin{equation}
        \begin{split}
            &| O_N \rrangle := (O_N \otimes \mathbb{1})|\phi^+\rangle,  \quad \text{and}  \\
        &| \tilde{O}_{N,\chi}^{(n)} \rrangle := \frac{1}{\sqrt{\sum_{i=1}^\chi \lambda_i^2}}(\tilde{O}_{N,\chi}^{(n)} \otimes \mathbb{1})|\phi^+\rangle, \label{eq:CJI}
        \end{split}
    \end{equation}
    where the first is the familiar Choi state of $O_N$ as described above Eq.~\eqref{eq:mpo}, while the second is the normalized Choi state of $\tilde{O}_{N,\chi}^{(n)}$. Then, the quantity on the right-hand-side of Eq.~\eqref{eq:trunc_error} is exactly the trace distance between these two states, 
    \begin{align}
    \frac{1}{2}\| |{O_N}\rrangle \llangle O_N |  - |{\tilde{O}_{N,\chi}^{(n)}}\rrangle \llangle {\tilde{O}_{N,\chi}^{(n)}} | \|_1 &= \sqrt{1 - |\llangle O_N|\tilde{O}_{N,\chi}^{(n)}\rrangle|^2}, \\
    &=  \sqrt{1 - \frac{1}{D^2} \left(  \frac{1}{\sqrt{\sum_{i=1}^\chi \lambda_i^2}} \tr[{O}_t^{\dagger } \tilde{O}_{N,\chi}^{(n)} ]\right)^2 } \\
    &= \sqrt{1 - \frac{1}{D^2} \left( \frac{\sum_{i=1}^\chi \lambda_i^2}{\sqrt{\sum_{i=1}^\chi \lambda_i^2}}\tr[\id]\right)^2 }\\
    &=\sqrt{1 - {\sum_{i=1}^\chi \lambda_i^2} } = \sqrt{\sum_{i=\chi+1}^{D_A^2} \lambda_i^2}, \label{eq:truncation}
\end{align}
where we have used that $\sum_{i =1}^{D_A^2} \lambda_i^2 = 1$.
% , which comes from the normalization $\| O \|_2 = \sqrt{D}$ and hence the normalization of the Choi state, $|{O_N}\rrangle = D^{-1/2} (O_N \otimes \id)\ket{\phi^+}$.
    % The right-hand-side is exactly the MPS truncation error for the quantum state $|O \rrangle$, appearing, for example, in the proofs of Refs.~\cite{Verstraete2006,Schuch_2008,Schuch2020}.
    Recalling that the squared singular values $\{ |\lambda_i|^2 \}_{i=1}^{D_A^2}$ define the probability distribution from which the LOE entropies are defined, the remainder of the proof proceeds via applying arguments from the two main results of Ref.~\cite{Schuch_2008}: 
    \begin{itemize}
        \item [(i)] We first provide a bound in terms of $E_A^{(1)}(O_N)$. Define the reduced density matrices of the states in Eq.~\eqref{eq:CJI}: $\sigma:= \tr_{B}[| O_N \rrangle \! \llangle  O_N |]$ and $\tilde{\sigma}:= \tr_{B}[| \tilde{O}_{N,\chi}^{(n)}\rrangle \! \llangle \tilde{O}_{N,\chi}^{(n)} |]$. Then, from Eq.~\eqref{eq:loe}, the R\'enyi entropy of $\sigma$ is exactly the LOE R\'enyi entropy, while the entropy of $\tilde{\sigma}$ satisfies $S^{(1)}_A(\tilde{\sigma}) \leq S^{(1)}_A(\sigma)$ and $S^{(1)}_A(\tilde{\sigma}) \leq \log(\chi)$.  From the contractivity of the trace norm under partial trace, we know that  
        \begin{equation}
            \| \sigma - \tilde{\sigma}\|_1 \leq \|| O_N \rrangle \! \llangle  O_N | - | \tilde{O}_{N,\chi}^{(j)} \rrangle \! \llangle \tilde{O}_{N,\chi}^{(j)}| \|_1.
        \end{equation}
        Applying the above, together with the Fannes-Audenaert inequality~\cite{wilde_2017} and Eqs.~\eqref{eq:trunc_error} and \eqref{eq:truncation}, we find, 
        \begin{align}
            S^{(1)}_A(\sigma)- \log(\chi)  &\leq |S^{(1)}_A(\sigma)- S^{(1)}_A(\tilde{\sigma})| \\
            \iff E^{(1)}_A(O_N)- \log(\chi)&\leq \frac{1}{2}\|\sigma-\tilde{\sigma} \|_1 \log(D_A^2) +1 \\
            &\leq  \| \Delta O^{(n)}\|_{\infty} 2 N_A  \log(d) +1.
        \end{align}
        Choosing $\rho$ such that $\| \Delta O^{(n)}\|_{\infty} = |\tr[\Delta O \rho]| =:\varepsilon $ (as discussed above Eq.\eqref{eq:trunc_error}), we arrive at 
        \begin{equation}
            \log(\chi) \geq  E^{(1)}_A(O_N) -2\varepsilon N_A\log(d)-1. \label{eq:a23}
        \end{equation}
        Now, we will interpret this in terms of approximability as an MPO, Def.~\ref{def:sim}.  Assume that $E_A^{(1)}(O_N) = \Omega(N)$. We know that the size of the bipartition $A$ must scale with $N$ to possibly have a volume law LOE R\'enyi entropy, so $N_A \geq \eta N $ for some $\eta >0$, and by our assumption we also know that $E^{(1)}_A(O_N) \geq a N $ for some $a>0$. Choosing a sufficiently small: $\varepsilon \leq a/(2 \eta \log(d)) = \mc{O}(1)$, we find that $\chi = \Omega(\exp(N)) $. Therefore, when $E^{(1)}_A(O_N) = \Omega(N)$, we have shown that there is no polynomial bond dimension MPO that well-approximates $O_N$ for expectation values over all possible states. Finally, we pick the worst case, $N_A \to N/2 $, to arrive at the first bound of Thm.~\eqref{thm:micro}.
        
        \item [(ii)] Now we consider $E_A^{(\alpha)}(O_N)$ for $\alpha > 1$. Again, we proceed by lower-bounding Hilbert-Schmidt truncation error ${\sum_{i=\chi+1}^{D_A^2} |\lambda_i|^2}$, which lower-bounds $\|\Delta O^{(n)} \|_\infty =: \varepsilon$, which in turn is equal to the expectation value error for $\rho$ chosen as described above Eq.~\eqref{eq:trunc_error}. From Lemma~\ref{lem:majorize}.b, for $\alpha> 1$ by identifying $\{ p_i\}_{i=1}^r \to  \{|\lambda_i|^2 \}_{i=1}^{d^{2n}}$ and $S^{(\alpha)}(\{ p_i\} ) \to E^{(\alpha)}_A(O_N)$, we have that 
        \begin{equation}
                \varepsilon^2 \geq \sum_{i=\chi+1}^{D_A^2} |\lambda_i|^2 \geq 1-\exp\left(\frac{\alpha-1}{\alpha }\big(\log(\chi) - E^{(\alpha)}_A(O_N) \big) \right). \label{eq:a24}
            \end{equation}
        Rearranging the above relation, we again find a lower-bound to the required bond dimension of $\tilde{O}_{N,\chi}$ for the error $\varepsilon$,
        \begin{equation}
            \log(\chi) \geq E^{(\alpha)}_A(O) + \frac{\alpha}{\alpha - 1}\log(1-\varepsilon^2 ).
        \end{equation}
         Assume that $E^{(\alpha)}(O_N) = \Omega(N^c)$ for $\alpha > 1$ and $c>0$. It is immediate to see that for any $a>0$ such that $E^{(\alpha)}(O_N) \geq a N^c $, the bond dimension scales at least as fast as $\chi \sim  \exp( a N^c)$ irrespective of the error $\varepsilon$, which is faster than any polynomial of $N$. Therefore, when $E^{(\alpha)}_A(O_N) = \Omega(N^c) $ with $\alpha >1$ and $c>0$, we have shown that no polynomial bond dimension MPO can well approximate $O_N$ for expectation values over all possible states. 
    \end{itemize}
    Finally, before assuming any scaling of the LOE, we note that the above expressions, Eqs.~\eqref{eq:a23}-\eqref{eq:a24}, are valid for any bipartition $A:B$. We therefore optimize over all bipartitions, 
    \begin{equation}
        E^{(\alpha)}(O_N):= \max_{\mc{H}=\mc{H}_A \otimes \mc{H}_B} \{ E^{(\alpha)}(O_N) \} 
    \end{equation}
    to arrive at the strongest bound of Thm.~\ref{thm:micro}.

\subsection{Proof of Theorem~\ref{thm:itac}} \label{ap:thm2}
\appendixcontentsline{subsection}{2. Proof of Theorem 2}{}
We now consider two-point correlation functions, $\tr[O_N X \rho]$, where we evolve (and truncate) only the operator $O_N$, and otherwise assume that the operator $X$ is Hilbert-Schmidt normalized with bounded operator norm: $\| X\|_2 = \sqrt{D}$ and $\| X \|_{\infty }\leq 1$. For instance, if $X=\id$, then we return to the case of single-point expectation values which are the subject of Thm.~\ref{thm:micro}. We consider a family of ensembles of states defined on $N$ qudits, which we call $\mc{E}$, and whose first moment has a spectral norm satisfying,
\begin{equation}
    \| \int_{\rho\sim \mc{E}}{\rho} \|_\infty := \| \overline{\rho} \|_\infty \leq \frac{b}{D}, \label{eq:spec_norm}
\end{equation}
for $b\geq 1$. Examples of such states include the trivial distribution of a single (high-temperature) state, (high-temperature) Scrooge ensembles, uniform distributions over computational basis states, Haar unitary or Clifford ensembles, and unitary designs. We note that similar `low-average' ensembles are studied in Ref.~\cite{schuster2024polynomialtime}, but here we allow $b$ to possibly scale with $N$. We consider a rank $\chi$ MPO approximation to $O_N$ and write the difference as $\Delta O := O_N - \tilde{O}_{N,\chi}$.

The first step is to show that the Hilbert-Schmidt norm appearing in Eq.~\eqref{eq:itac} also bounds the average mean-square error between expectation values:
\begin{equation}
    \begin{split}
        \int_{\rho \sim \mc{E}} \tr[ \Delta O X \rho ]^2 &= \int_{\ket{\psi} \sim \mc{E}}  \tr[ (\ket{\psi}\! \bra{\psi} ) (\Delta O X)\ket{\psi}\!\bra{\psi}(\Delta O X)^\dagger )\\
    &=  \int_{\rho \sim \mc{E}} \sum_{i,j} p_i p_j    \tr\big[(\ket{\psi_i}\!\bra{\psi_i}) ( \Delta O X \ket{\psi_j}\!\bra{\psi_j} (\Delta O X)^{\dagger} )  \big]  \\
    &\leq  \int_{\rho \sim \mc{E}} \sum_{i,j} p_i p_j     \bra{\psi_j} \Delta O X X^{\dagger} \Delta O^{\dagger}  \ket{\psi_j} \label{eq:av_proof}\\
    &= \int_{\rho \sim \mc{E}}    \tr[ (\Delta O X)^2 \rho]  \\
    &\leq \| \overline{\rho} \|_\infty  \|(\Delta O X)^2\|_1  \\
    &\leq \frac{b}{D}   \|\Delta O X \|_2^2  \\
    &\leq b \frac{\|\Delta O\|_2^2}{D}.  
    \end{split}
\end{equation}
Here, we have first expanded the arbitrary state $\rho$ within the averaging in terms of a convex sum of pure states, $\rho=\sum_i p_i \ket{\psi_i}\!\bra{\psi_i}$, then used the Cauchy-Schwarz inequality with respect to the Hilbert-Schmidt inner product, and finally applied the trace H\"older's inequality. Note also that $\|(\Delta O X)^2\|_1 = \tr[(\Delta O X)^2] = \|\Delta O X\|_2^2 \leq \|\Delta O \|_2^2$, with the final step using the assumption $\| X \|_{\infty} \leq 1$ together with the inequality $\|XY\|_2\leq \| X\|_\infty \|Y\|_2$\footnote{This can be proven through applying the von Neumann trace inequality to the definition $\|XY\|_2^2 = \tr[Y^\dagger X^\dagger X Y ]$, for ordered singular values $\sigma_i$: $ \|XY\|_2^2 \leq \sum_i \sigma_i(X)^2 \sigma_i(Y)^2 \leq \sigma_1(X)^2 \sum_i \sigma_i(B)^2$.
}. A similar relation between the Hilbert-Schmidt norm and the average mean-square difference of expectation values has been derived in Ref.~\cite{schuster2024polynomialtime}. 

It remains to upper bound the Hilbert-Schmidt norm difference appearing in the upper bound of Eq.~\eqref{eq:av_proof} in terms of LOE R\'enyi entropies. It turns out to be a simpler task first to consider an approximation to $O_N$ found through truncating across a \textit{single} bipartition $A:B$, as in Eq.~\eqref{eq:truncateOnce}, 
\begin{equation}
    \Delta O^{(n)} := O_N - \tilde{O}_{N,\chi}^{(n)} .
\end{equation}
Explicitly, a single-cut truncation is given by
\begin{align}
     \tilde{O}_N^{(n)} := \sum_{i=1}^{\chi} \lambda_i^{(n)} A_i^{(n)} \otimes B_i^{(n)} = P_n(O_N).
\end{align}
where for $1\leq n \leq N-1$, $A_i^{(n)}$ has support on the first $n$ qudits, while $B_i^{(n)}$ has support on the remaining $N-n$ qudits. $P_n$ is an orthogonal projection operation, which can be taken to act either on the first $n$ or final $N-n$ qudits. We know the existence of $P_n$ from the orthonormality of the Schmidt vectors $|A_i^{(n)} \otimes B_i^{(n)} \rrangle$, and hence Hilbert-Schmidt orthonormality of the matrices: $\tr[(A_i^{(n)} \otimes B_i^{(n)}) (A_j^{(n)} \otimes B_j^{(n)})] = D \delta_{ij}$. Explicitly, assuming that $n=N_A < N_B$, we can write
\begin{equation}
        P_n(X) = \frac{1}{d^{N_A}} \sum_{i=1}^{\chi} A_i \otimes \tr_A[(A_i^\dagger \otimes \id_B) X].
\end{equation}
Now we will `stitch together' such single-cut approximations. A full rank-$\chi$ MPO approximation to $O_N$ can be found through successive such projections across every cut~\cite{Schuch2020,jiriThesis},
\begin{equation}
    \tilde{O}_{N,\chi} = P_1 (P_2 (\cdots (P_{N-1} (O_N))\cdots ) ). \label{eq:projections}
\end{equation}
Each projection ensures that the bond dimension across its corresponding cut is limited to $\chi$, so we are guaranteed that the above procedure gives an MPO approximation, but of course, there could be more optimal or efficient truncation methods in practice. Note that the ordering of the projections matters, as they do not commute generally. In terms of standard tensor network diagrammatics (cf. Fig.~\ref{fig:main-results}),
\begin{equation}
    |P_{N-1}(O_N)\rrangle = \frac{1}{\sqrt{D}}\, \includegraphics[scale=1.6, valign=c]{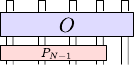} = \frac{1}{\sqrt{D}} \, \includegraphics[scale=1.6, valign=c]{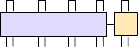} = |\sum_{i_{N-1}=1}^{\chi}  A_{i_{N-1}}^{[1\dots N-1]} \otimes B_{i_{N-1}}^{[N]} \rrangle, \label{eq:series_projs}
\end{equation}
where we work in the vectorized representation, as this is where the projectors act via matrix multiplication. Note also that the singular values are absorbed into the definition of $B_{i_{N-1}}^{[N]}$. 
After $N-1$ projections, given that the successive ones do not cross any previously created bond, we find 
\begin{equation}
        |P_1 (\cdots (P_{N-1} (O_N))\cdots )\rrangle = \frac{1}{\sqrt{D}} \, \includegraphics[scale=1.6, valign=c]{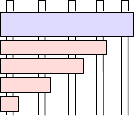} = \frac{1}{\sqrt{D}} \, \includegraphics[scale=1.6, valign=c]{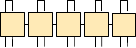} 
\end{equation}
where the right-hand side is exactly a rank $\chi$ MPO, $\tilde{O}_{N,\chi} $. We do not write out the resultant MPO explicitly here and instead refer to tensor network literature where this is a standard approach~\cite{Verstraete2006,Eisert_Cramer_Plenio_2010,ORUS2014117,Schuch2020,jiriThesis}. We now need to relate the successive single-cut errors to the full MPO error $\| \Delta O\|_2 $ appearing in Eq.~\eqref{eq:av_proof}. The projections do not commute in Eq.~\eqref{eq:projections}, so we cannot combine them into a single orthogonal projection operation. However, any two projections give a combined error of~\cite{Schuch2020,jiriThesis}
\begin{align}
    \|O_N - P_m(P_n(O_N ) )\|_2 &\leq \|O_N - P_m(O_N ) \|_2+ \|P_m(O_N) - P_m(P_n(O_N ) )\|_2 \\
    &= \|O_N - P_m(O_N ) \|_2+ \|P_m(O_N - P_n(O_N ) )\|_2 \\
    &\leq \|O_N - P_m(O_N ) \|_2+ \|O_N - P_n(O_N ) \|_2, 
\end{align}
where we have used the triangle inequality and the fact that the Hilbert-Schmidt norm is contractive under orthogonal projections. Applying this inequality $N-1$ times, we can directly bound 
\begin{equation}
    \|\Delta O \|_2 = \|O_N- P_1 (\cdots (P_{N-1} (O_N))\cdots  )\|_2 \leq \sum_{n=1}^{N-1} \|O_N -  P_n(O_N) \|_2 \leq (N-1) \max_{1\leq n \leq N-1} \| \Delta O^{(n)}\|_2. \label{eq:c9}
\end{equation}
For the single-cut approximations appearing on the right-hand side here, $\| \Delta O^{(n)}\|_2$, we recall the relation in Eq.~\eqref{eq:trunc_error} between the sum of truncated singular values and the Hilbert-Schmidt distance: $D^{-1}\| O_N - \tilde{O}_{N,\chi}^{(n)} \|_2^2  = { \sum_{i=\chi+1}^{d^{2 n }} \lambda_i^2 }$. We can now directly apply Lemma~\ref{lem:majorize} to find an upper bound in terms of LOE entropies. Namely, setting $\{ p_i\}_{i=1}^r \to  \{|\lambda_i|^2 \}_{i=1}^{d^{2n}}$ and $S^{(\alpha)}(\{ p_i\} ) \to E^{(\alpha)}_A(O_N)$, we have for $\alpha < 1$, 
\begin{equation}
    \frac{\|\Delta O^{(n)}\|_2^2}{D}\leq    \exp\left( \frac{1-\alpha}{\alpha}\Big( E^{(\alpha)}_A(O_N) - \log\big( \frac{\chi}{1-\alpha} \big) \Big) \right), \label{eq:first}
\end{equation}
valid for any bipartition $A:B$, with sizes $N_A = n$ and $N_B = N-n$. Finally, from Jensen's inequality, we know that the root of the average mean-square difference is always an upper bound to the average absolute error. Applying this to Eq.~\eqref{eq:av_proof} together with Eqs.~\eqref{eq:c9} and \eqref{eq:first}, we find  
\begin{align}
     \int_{\rho \sim \mc{E}} | \tr[ \Delta O \rho ] | &\leq \sqrt{ \int_{\rho \sim \mc{E}} \tr[ \Delta O \rho ]^2 } \\
     &\leq b^{1/2} \frac{\|\Delta O \|_2}{D^{1/2}} \\
     &\leq (N-1)b^{1/2} \max_{1\leq n \leq N-1} \frac{\| \Delta O^{(n)}\|_2}{D^{1/2}} \\
     &\leq (N-1) b^{1/2} \exp\left( \frac{1-\alpha}{2\alpha}\Big( E^{(\alpha)}(O_N) - \log\big( \frac{\chi}{1-\alpha} \big) \Big) \right),
\end{align}
where we take $E^{(\alpha)}(O_N)$ to be the max LOE over all cuts. We have arrived at Thm.~\ref{thm:itac} (where we also write $N-1 \leq N $ for brevity). Analyzing this bound in the context of simulability, if we assume that $E^{(\alpha)}_A(O_N) \leq c \log(N)$, then
\begin{equation}
    \int_{\rho \sim \mc{E}} |\tr[\Delta O X \rho ]| \leq (N-1) b^{1/2}  \left( \frac{N^c (1-\alpha)}{\chi}\right)^{\frac{1-\alpha}{2\alpha} } \label{eq:one}
\end{equation}
and so $|\tr[\Delta O X \rho ]| \leq \varepsilon$ is satisfied on-average when
\begin{equation}
    \chi \geq N^c(1-\alpha)\left( \frac{b  (N-1)^2 }{\varepsilon^2} \right)^{\frac{\alpha}{1-\alpha}}.\label{eq:two}
\end{equation}
We see that if $b\sim \mathrm{poly}(N)$ and the required precision is $\varepsilon^{-1} \sim \mathrm{poly}(N)$, then the required bond dimension $\chi$ is also a polynomial of $N$. From Def.~\ref{def:sim} this means that for the scaling $E^{(\alpha)}_A(O_N) = \mc{O}(\log(N))$ for $\alpha <1$, $O_N$ is efficiently approximable as an MPO on average according to two-point correlations over ensembles of states satisfying Eq.~\eqref{eq:spec_norm} with $b=\mc{O}(\mathrm{poly}(N))$.

\subsection{Proof of Theorem~\ref{thm:otoc}} \label{ap:otoc}
\appendixcontentsline{subsection}{3. Proof of Theorem 3}{}
We now move on to the question of the efficient simulability of higher-order OTOCs using MPOs. Our first task is to relate the difference between OTOCs to a Hilbert-Schmidt distance between the operator $O_N$ and an approximation $\tilde{O}_{N,\chi}$. To reduce notational clutter in this section, we write $O \equiv O_N$ and $\tilde{O} \equiv \tilde{O}_{N,\chi}$. Using a telescoping sum identity, we know that for any matrices $X$ and $Y$,
\begin{equation}
    X^{\otimes k} -Y^{\otimes k} = \sum_{j=0}^{k-1} X^{\ot j} \ot (X-Y) \ot Y^{\ot (k-1-j)}. \label{eq:telescope}
\end{equation}
It is readily checked that the $2k$-OTOC can be rewritten in terms of a trace over $k$ replicas of Hilbert space,
\begin{equation}
    \mathrm{OTOC}^{(k)}(O,X):=\frac{1}{D}\tr[( O X)^{k}] =  \frac{1}{D}\tr[ O^{\otimes k} X^{\otimes k} T_{\gamma}] \label{eq:replica_otoc}
\end{equation}
where $T_{\gamma}$ is the unitary which cyclically permutes the replicas: sending the first to the second, the second to the third, etc. Combining Eqs.~\eqref{eq:telescope} and \eqref{eq:replica_otoc}, and returning back to a single replica representation of the $2k$-OTOC, we can bound the error 
 \begin{align}
    |\Delta \mathrm{OTOC}^{(k)} | &:=|\mathrm{OTOC}^{(k)}(O,X) - \mathrm{OTOC}^{(k)}(\tilde{O},X) |\nn  \\
    &= \frac{1}{D}| \tr[ (O^{\otimes k}-\tilde{O}^{\otimes k}) X^{\otimes k} T_{\gamma}] |\nn  \\
    &=\frac{1}{D}| \sum_{j=0}^{k-1} \tr[ (O^{\ot j} \ot (O-\tilde{O}) \ot \tilde{O}^{\ot (k-1-j)}) X^{\otimes k} T_{\gamma}] | \nn \\
    &=\frac{1}{D}| \sum_{j=0}^{k-1} \tr[ (O X)^j (O-\tilde{O}) X (\tilde{O} X)^{k-1-j} ] |
\end{align}
Applying the triangle and Cauchy-Schwarz inequalities, for $k\geq 2 $ we have
 \begin{equation}
     \begin{split}
         |\Delta \mathrm{OTOC}^{(k)} |  \leq& \frac{1}{D} \sum_{j=0}^{k-1}   \| \Delta O \|_2 \| X (\tilde{O}X)^{k-1-j} (O X)^j \|_2  \\
        \leq& \frac{\| \Delta O \|_2 }{D} (\sum_{j=0}^{k-2} \| O \|_\infty^j \| X \|_\infty^k  \|\tilde{O} \|_\infty^{k-2-j} \| \tilde{O} \|_2 + \| O \|_\infty^{k-1} \| X\|_\infty^k \|\id \|_2)  \\
        \leq&  \frac{\| \Delta O \|_2 }{D} \| (\| \tilde{O} \|_2 \sum_{j'=0}^{k-2}  \|\tilde{O}\|_\infty^{j'} + \|\id \|_2) \\
        \leq &  \frac{\| \Delta O \|_2 }{\sqrt{D}}  \left(  \frac{\|\tilde{O}\|_\infty^{k-1}-1}{\|\tilde{O}\|_\infty-1} + 1 \right)
     \end{split}
 \end{equation}
Above, we have applied the inequality $\|XY\|_2\leq \| X\|_\infty \|Y\|_2$ (which is also used in Eq.~\eqref{eq:av_proof}) and used the assumptions that $\|X\|_\infty,\, \|O\|_\infty \leq 1$ and $\| \tilde{O}\|_2 \leq \| O \|_2 = \sqrt{D}$. Looking at the above relation, we have an overall factor of the familiar (normalized) Hilbert-Schmidt error, ${\| \Delta O \|_2 }/{\sqrt{D}} $, which we can directly bound in terms of the $\alpha <1$ LOE R\'enyi entropies as in Thm.~\ref{thm:itac}. For $k=2$, corresponding to the most-commonly studied $4$-point OTOC, we find a factor of $2$, and so in that case we are done: all the implications of Thm.~\ref{thm:itac} apply. However, for $k\geq 3$ we have a multiplicative factor in terms of $\|\tilde{O}\|_\infty$. In full generality, this quantity could be large, as each of the orthogonal projections in Eq.~\eqref{eq:series_projs} can be naively bounded only by a factor of $\sqrt{\chi}$ without further assumptions.

Finally, we bound the factor ${\| \Delta O \|_2 }/{\sqrt{D}}$ in the same way as Thm.~\ref{thm:itac}: we can bound each single-cut truncation error by $\alpha<1$ R\'enyi entropies using Lemma~\ref{lem:majorize}, and then we stitch together such approximations using the argument of Eq.~\eqref{eq:c9} to arrive at the final inequality of Thm.~\ref{thm:otoc}.

\section{Typical Spectral Truncation Error} \label{ap:random}
\appendixcontentsline{section}{B. Typical Spectral Truncation Error}{}
We now return to the problem of the spectral norm error of MPO approximations. While $\|\Delta O \|_{\infty}$ upper bounds arbitrary non-equilibrium expectation values (see the discussion around Eq.~\eqref{eq:upperHold}), it is not easy to constrain it in terms of LOE entropies. This is because $\|\Delta O \|_{\infty}$ depends strongly on the spectral properties of the singular matrices $A_i \otimes B_i$ from the operator Schmidt-decomposition, while LOE entropies depend only on the distribution of (squared) Schmidt coefficients. One might therefore hope to relate $\|\Delta O \|_{\infty}$ to $\|\Delta O \|_{2}$ (which in turn bounds LOE), in order to obtain simulability guarantees for expectation values over arbitrary out-of-equilibrium states. However, the general bound of $\| \Delta O \|_\infty \leq \| \Delta O \|_2$ can lead to $\| \Delta O \|_\infty$ being exponentially larger than $\| \Delta O \|_2/\sqrt{D}$. Nonetheless, plotting the two errors in relevant spin chain models shows that while $\| \Delta O \|_\infty$ is of course larger than $\| \Delta O \|_2/\sqrt{D}$, they behave somewhat similar and $\| \Delta O \|_\infty$ does not diverge with $N$; see Fig.~\ref{fig:numerics}.

To explain this behavior, in this section, we develop a random matrix model that describes a generic set of matrices appearing in the operator Schmidt decomposition, for a given distribution of Schmidt coefficients $\lambda_i$. In particular, for a given operator Schmidt decomposition, we replace the resultant matrices with a random matrix ansatz:
\begin{equation}
    O = \sum_{i=1}^{d^{2 N_A}} \lambda_i A_i \otimes B_i  \to \mc{E}_O := \left\{O \sim \sum_{i=1}^{d^{2 N_A}} \lambda_i A_i \otimes B_i : \mathbb{E} (A_i \otimes B_i) =0, \mathbb{E} ( A_i A_i^{\dagger}) = \id =\mathbb{E} ( B_i B_i^{\dagger}), \|A_i \otimes B_i\|_\infty \le L \right\}.
\end{equation}
This leaves both the LOE R\'enyi entropies and the Hilbert-Schmidt norm error invariant, as we leave the operator Schmidt spectrum $\{\lambda_i\} $ constant across the ensemble. The first two conditions on $A_i \otimes B_i$ mean that the distribution is centered around the zero matrix, with a square that is centered around $\id$: this is reasonable as it is an unbiased way to ensure the fact that $\tr[A_i] =\tr[B_i]=0$ and $\tr[A_i^\dagger A_i ] = \tr[B_i^\dagger B_i ] = D$ (which we know are true for any of the traceless, normalized $O$ which one usually considers). The final condition of this ensemble is somewhat more difficult to justify. From numerical results (see Figs.~\ref{fig:numerics}-\ref{fig:numerics2}), we have determined that the spectral norm can approximately reach its upper bound of $\|A_i \otimes B_i \|_\infty \leq \|A_i \otimes B_i \|_2 = \sqrt{D} $. However, we also observe that the bulk of the distribution of $\|A_i \otimes B_i \|_\infty$ are $\mc{O}(1)$, and the distribution becomes more concentrated with increasing $N$.

To arrive at our result, we need a fundamental result from random matrix theory, which we reproduce from Ref.~\cite{tropp2015introductionmatrixconcentration}.
\begin{lemma} \textbf{(Matrix Bernstein Inequality)} \label{lem:bern}
    Consider a finite sequence $\{S_i\} $ of independent, $D$-dimensional random matrices, which satisfy:
\[
\mathbb{E} (S_i) = 0
\quad\text{and}\quad
\|S_i\|_\infty \le L
\quad\text{almost surely for each } i .
\]
Then
\[
\mathbb{E}(\,\|\sum_i S_i\|_\infty)
\;\le\;
\sqrt{\,2\,\nu\,\log(2D)\,}
\;+\;
\frac{1}{3} L\, \log(2D),
\]
where $\nu$ is the variance, 
\begin{equation}
\nu := \max \bigl\{\, \bigl\| \mathbb{E}( \sum_i S_i S_i^\dagger) \bigr\|_\infty,\;
                   \bigl\| \mathbb{E}(\sum_i S_i^\dagger S_i) \bigr\|_\infty \bigr\}.
\end{equation}
\end{lemma}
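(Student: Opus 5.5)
The plan is the standard Laplace-transform (matrix moment generating function) argument, following the route in Ref.~\cite{tropp2015introductionmatrixconcentration}. It has four steps: reduce to Hermitian matrices, control the maximal eigenvalue through a trace exponential, decouple the independent summands using Lieb's concavity theorem, and bound each summand's matrix moment generating function.

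\emph{Reduction to the Hermitian case.} First I would pass to the Hermitian dilation
\begin{equation}
\mathscr{H}(S) := \begin{pmatrix} 0 & S \\ S^\dagger & 0 \end{pmatrix},
\end{equation}
a $2D$-dimensional Hermitian matrix. The properties I need are:
\begin{itemize}
\item the map $S \mapsto \mathscr{H}(S)$ is linear, so $\mathscr{H}(S_i)$ are independent and centred;
\item $\|\mathscr{H}(S)\|_\infty = \|S\|_\infty$, so the bound $L$ is preserved;
\item $\mathscr{H}(S)^2 = \mathrm{diag}(SS^\dagger, S^\dagger S)$.
\end{itemize}
Setting $X_i := \mathscr{H}(S_i)$ and $Y:=\sum_i X_i$, the third property gives $\|\sum_i \mathbb{E}X_i^2\|_\infty = \nu$. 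Since the spectrum of a dilation is symmetric about zero, $\|\sum_i S_i\|_\infty = \lambda_{\max}(Y)$. This is exactly where the factor $2D$ and the maximum over the two variances in $\nu$ come from.

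\emph{Laplace-transform step.} For $\theta>0$, I would use $e^{\theta \lambda_{\max}(Y)} = \lambda_{\max}(e^{\theta Y}) \leq \tr e^{\theta Y}$ together with Jensen's inequality to get
\begin{equation}
\mathbb{E}\lambda_{\max}(Y) \leq \theta^{-1}\log \mathbb{E}\tr e^{\theta Y}.
\end{equation}
Next I would invoke Lieb's concavity theorem, that $A\mapsto \tr\exp(H+\log A)$ is concave on positive matrices. Applying it iteratively with Jensen and conditioning on one summand at a time yields subadditivity of matrix cumulant generating functions:
\begin{equation}
\mathbb{E}\tr e^{\theta Y}\leq \tr\exp\Big(\sum_i \log \mathbb{E} e^{\theta X_i}\Big).
\end{equation}
I expect this to be the main obstacle, since it is the only genuinely non-commutative input. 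I would take Lieb's theorem as a known result rather than reprove it.

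\emph{Bounding each summand.} For a single centred $X$ with $\|X\|_\infty\leq L$ and $0<\theta<3/L$, I would expand $e^{\theta x} = 1+\theta x + x^2 f(x)$, where $f$ is increasing. By the transfer rule for semidefinite order this gives $e^{\theta X}\preceq \id+\theta X + f(L)X^2$. The scalar estimate $f(L)\leq g(\theta):=\frac{\theta^2/2}{1-\theta L/3}$ follows from $q!\geq 2\cdot 3^{q-2}$. Taking expectations and using $\log(\id+M)\preceq M$ (operator monotonicity of the logarithm) gives $\log\mathbb{E}e^{\theta X_i}\preceq g(\theta)\mathbb{E}X_i^2$. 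Since $\tr\exp$ is monotone in the semidefinite order, I obtain
\begin{equation}
\mathbb{E}\tr e^{\theta Y}\leq 2D\, e^{g(\theta)\nu}.
\end{equation}

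\emph{Optimisation over $\theta$.} Combining the steps gives
\begin{equation}
\mathbb{E}\lambda_{\max}(Y)\leq \frac{\log(2D)}{\theta}+\frac{\theta\nu/2}{1-\theta L/3}.
\end{equation}
The final step is to choose $\theta$, following the standard choice in Ref.~\cite{tropp2015introductionmatrixconcentration}. It must balance the two terms so that they produce $\sqrt{2\nu\log(2D)}$, with the denominator $1-\theta L/3$ accounting for the additive $\tfrac13 L\log(2D)$. This last step is routine calculus and I would only check that the chosen $\theta$ lies in $(0,3/L)$.
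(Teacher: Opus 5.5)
Your proposal is correct and is exactly the argument in Tropp's monograph (Hermitian dilation, Lieb-based subadditivity of the matrix cumulant generating function, the Bernstein bound on each summand's moment generating function, then optimisation over $\theta$). The paper gives no proof of its own and simply imports the result from that reference. For the step you left implicit, take $\theta^{-1} = L/3 + \sqrt{\nu/(2\log(2D))}$: this lies in $(0,3/L)$ for $\nu>0$ and makes the two terms sum exactly to $\sqrt{2\nu\log(2D)} + \tfrac{1}{3}L\log(2D)$, while $\nu=0$ forces every $S_i=0$ almost surely and is trivial.
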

Using this, we can prove Thm~\ref{thm:random}. Namely, considering the operator difference after truncation for a sampling of $\mc{E}_O$, $\Delta O = \sum_{i=\chi+1}^{d^{2N_A}} \lambda_i A_i \otimes B_i$, we choose $S_i :=  \lambda_i A_i \otimes B_i $ with $\chi+1 \leq i \leq d^{2 N_A}$ in the matrix Bernstein inequality. Then, we have 
\begin{align}
    &\mathbb{E} (S_i) \to  \lambda_i \mathbb{E} (A_i \otimes B_i) =0, \\
    &\|S_i\|_\infty \to \| \lambda_i  | \|A_i \otimes B_i \|_\infty \leq | \lambda_{\chi+1}  | L, \\ 
    &\| \mathbb{E}(\sum_{i}  S_i^\dagger S_i)\|_\infty \to \| \sum_{i=\chi+1}^{d^{2N_A}} |\lambda_i|^2 \mathbb{E}(A_i \otimes B_i)  \|_\infty =  \| \sum_{i=\chi+1}^{d^{2N_A}} |\lambda_i|^2 \id \|_\infty = \sum_{i=\chi+1}^{d^{2N_A}} |\lambda_i|^2.
\end{align}
In the second line we have used that $\lambda_i \geq \lambda_{i+1}$. We notice in the final line, we have arrived at our usual Hilbert-Schmidt operator truncation error, $D^{-1}\| \Delta O \|_2^2$. Putting this all together into Lemma~\ref{lem:bern}, we have that 
\begin{align}
    \mathbb{E}(\,\|\sum_i S_i\|_\infty) \to   \mathbb{E}(\,\|\Delta O\|_\infty) &\leq \sqrt{2 D^{-1}\| \Delta O \|_2^2 \log(2 d^N ) } + \frac{\log(2 d^N ) | \lambda_{\chi+1}  | L }{3} \\
    &\leq  \frac{\| \Delta O \|_2}{\sqrt{D}}\sqrt{2 (N+1) \log(d) } + \frac{(N+1)\log(d)\,  | \lambda_{\chi+1}  | L }{3}  
\end{align}
where in the inequality we use that $d\geq 2 $ to get a slightly cleaner expression. Finally, noticing that $\{|\lambda_i|^2\} $ defines a distribution, we apply Lemma~\ref{lem:majorize} parts \textbf{a} and \textbf{c} to $D^{-1/2}\| \Delta O \|_2$ and $ | \lambda_{\chi+1}  |$ respectively, to arrive at:
\begin{align}
     \mathbb{E}(\,\|\Delta O\|_\infty) \leq&   \sqrt{2 (N+1) \log(d) } \exp \left( \frac{1-\alpha}{2\alpha} \Big( E^{(\alpha)}(O_N) - \log(\frac{\chi}{1-\alpha}) \Big) \right) \nn \\
     &+ \frac{(N+1)  L \log(d)}{3} \exp \left( \frac{1-\alpha}{2\alpha} E^{(\alpha)}(O) - \frac{\log(\chi)}{2\alpha}\right) \nn \\
     \leq& c(d,\alpha,L) N \exp \left( \frac{1}{2\alpha} \Big( (1-\alpha) E^{(\alpha)}(O_N) - \log({\chi}) \Big) \right). 
\end{align}
Here, in the second line we have used that $1-\alpha \leq 1$ and that $N+1\leq 2N$ to arrive at a simpler expression, and gathered the constants together,
\begin{equation}
    c(d,\alpha,L) := 2( \sqrt{\log(d)  } (1-\alpha)^{\frac{1-\alpha}{ 2\alpha} } + 3^{-1} L \log(d) ). \label{eq:const}
\end{equation}
This completes the proof of Thm.~\ref{thm:random}. One could, of course, not use $1-\alpha \leq 1$ and $N+1\leq 2N$ to achieve a stronger (but messier) bound. 

\begin{figure*}[t]
    \centering
    \includegraphics[width=0.8\linewidth]{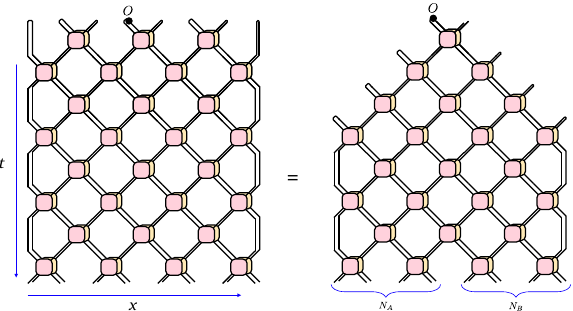}
    \caption{A vectorized operator $| O \rrangle$ under the action of a brickwork circuit in the folded picture. Red bricks (foreground) are the two-site unitary $U$, and yellow bricks (background) denote its conjugate $U^*$. Here, we show how the lightcone in such circuits is exact for an initially local operator $O$. The shown circuit is on $N=8$ qudits, and $t=8$ layers of the Floquet evolution. 
    }
    \label{fig:brickwork}
\end{figure*}

\section{Details of Numerics} \label{ap:numerics}
\appendixcontentsline{section}{C. Details of Numerics}{}
We study $N$-qubit spin chain models evolving under a discrete brickwork circuit~\cite{Nahum2017}, parametrized by a single two-site unitary. A single step of time evolution, $U$, consists of a layer of two-site unitaries applied to next-neighbor sites. When $t$ is odd, we take $U^{\otimes N/2-1}$ to be applied to sites $\{2,3\} $, $\{4,5\}, \dots$, with identity on the boundary sites $1$ and $N$ (we take $N$ to be even). When $t$ is even, $U^{\otimes N/2}$ is applied to the other next-neighbour pairings: sites $\{1,2\} $, $\{3,4\}, \dots$. Every two steps, therefore, defines a repeating Floquet evolution step. We show an example of a time-evolved operator in Fig.~\ref{fig:brickwork}. After $t$ steps, we denote the full evolution operator as $U_t$, and a time-evolved Heisenberg operator as $O_t:= U_t^\dagger O U$ for some initial Hermitian $O$.

For the two-body gate, we study the following models:
\begin{itemize}
    \item The XXZ model, 
    \begin{equation}
        U_{\mathrm{XXZ}}^{(2)}  = \exp\left( -i J (  \sigma_x \otimes \sigma_x + \sigma_y \otimes \sigma_y  + \Delta \sigma_z \otimes \sigma_z) )\right)
    \end{equation}
    with $J=1$ and $\Delta = 0.55$. This model is interacting integrable.
    \item The (kicked) Ising model (KIM)
    \begin{equation}
        U_{\mathrm{KIM}}^{(2)} = \exp\left( -i ( J \sigma_z \otimes \sigma_z + h_x( \id \ot \sigma_x +  \sigma_x \ot \id) + h_z( \id \ot \sigma_x +  \sigma_x \ot \id) ) \right).
    \end{equation}
    With $J=1,\, h_x=0.9045,\, h_z=0.8090 $ (corresponding to the data of Fig.~\ref{fig:numerics}), this model is non-integrable. For $h_z=0$, we recover the integrable transverse field Ising model (TFIM), and we also study this in Fig.~\ref{fig:numerics2} (with $J=1, \, h_x=0.55$). 
\end{itemize}
In both cases, taking $N$ to be even, we choose an initial operator $O$ to be a local $\sigma_z$ on the site $i=N/2 +1$. Then, after $t$ Floquet steps (where $t$ may be a half integer), the Heisenberg operator is $O_t =U^{\dagger}_t O U_t $. Consider an approximation to $O_t$ given by a truncation across the center of the spin chain:
\begin{equation}
    O_t = \sum_{i=1}^{d^{N}}  \lambda_i A_i \otimes B_i \to   \sum_{i=1}^{\chi}  \lambda_i A_i \otimes B_i =: \tilde{O}_{t,\chi}
\end{equation}
where clearly $\lambda_i$, $A_i$, and $B_i$ all depend on time. Notice that as the bipartition is across the half chain, in the worst case, we have $d^{2 N_A} = d^N$ non-zero Schmidt coefficients $\lambda_i$. The truncation is determined through a sharp cutoff: $\lambda_{\chi+1} < 0.02$ and $\lambda_{\chi} \geq 0.02$. Using exact diagonalization, for both of these models with $N=\{8,10,12 \}$, we compute the two relevant truncation errors appearing in this work: $\| \Delta O \|_\infty $ and $D^{-1/2} \| \Delta O \|_2 $ where $\Delta O := O_t - \tilde{O}_{t,\chi}$; see Fig.~\ref{fig:numerics2} (a). The results for the XXZ and KIM models for $N=12$ are shown together in the left panel of Fig.~\ref{fig:numerics}. We see that with increasing $N$, the truncation errors do not differ from each other by a significant amount (noting that $\| \Delta O \|_\infty$ could be greater than one in the worst case). Note that we are limited in system size as the vectorized operators live on a doubled Hilbert space of dimension $d^{2 N}$---so the computational cost of half-chain Schmidt decomposition of an $N=12 $ qubit operator is equivalent to a $N=24$ site qubit chain. For all models, we also see an approximate power-law growth in both truncation errors with time. 

In Fig.~\ref{fig:numerics2} we show the von Neumann LOE for the discussed models, again across the half-chain and for an initial $\sigma_z$. We see evidence of the logarithmic growth for the two integrable models (TFIM and XXZ), and a linear growth for the non-integrable model (KIM). This supports the conjecture that the scaling of LOE is a witness for the (non-)integrability of spin chains~\cite{Prosen2007}. Note that there are analytic results on the LOE for the TFIM: a time-evolved local $\sigma_x$ operator has a Majorana index of $2$, and thus an LOE trivially bounded by $2$~\cite{Prosen2007a}. For $\sigma_z$ (what we use here), it is also known analytically for the critical parameter $h_x=1$~\cite{Dubail_2017}, scaling logarithmically with time. Note also that the LOE of the Ising Model with longitudinal field (continuous version of the brickwork KIM studied here) is studied numerically in Ref.~\cite{Prosen2007}, and the continuous XXZ model in Ref.~\cite{Alba2021}.

We also compute the spectral norm of the matrices $A_i \otimes B_i$ from the operator Schmidt decomposition. The data for $N=12$ for the XXZ and KIM models is displayed in the right panel of Fig.~\ref{fig:numerics}, while the TFIM data is shown in Fig.~\ref{fig:numerics2} (c). The distribution is seen to concentrate around an $\mc{O}(1)$ value with increasing $N$. This data therefore serves as a justification for the bound $ \| A_i \otimes B_i \|_\infty \leq L$ for some $L=\mc{O}(1)$ assumed in the random matrix theory model studied in Thm.~\ref{thm:random}. Interestingly, we find that for the TFIM, $\| A_i \otimes B_i \|_\infty \approx 1$ for all $i$. This likely stems from its Majorana structure, and it is an interesting question if this behavior can be derived analytically, perhaps using the results of Ref.~\cite{Dubail_2017}. 

\begin{figure*}[b]
    \centering
    \includegraphics[width=0.8\linewidth]{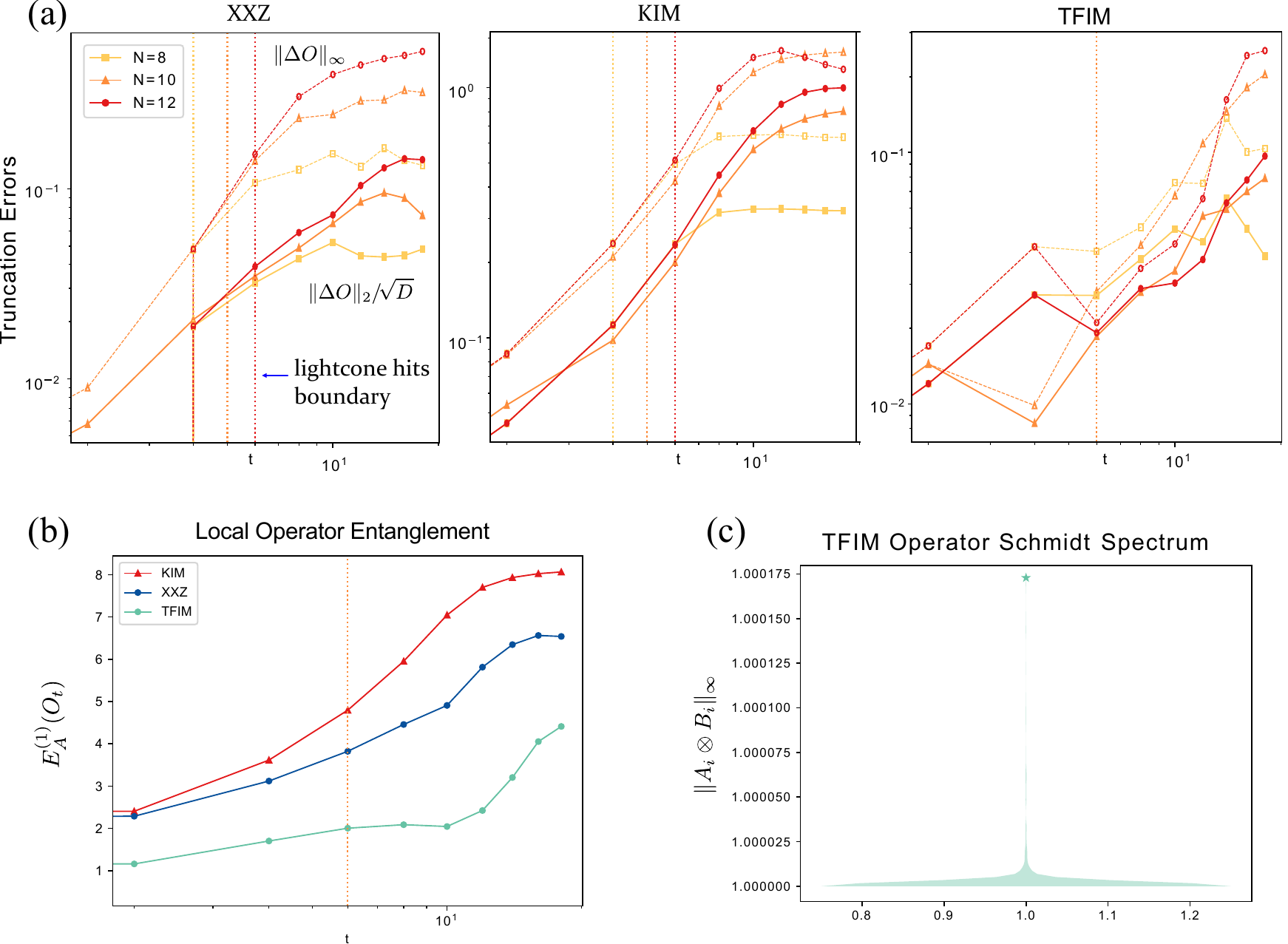}
    \caption{Additional numerical results. (a) We first compute the truncation errors, as studied in the left panel of Fig.~\ref{fig:numerics}, but for different system sizes $N=\{8,10,12\}$ and also including the integrable transverse field Ising model (TFIM). For different system sizes, the light cone of the initially local operator (on the center spin) hits the boundary at different depths, which are marked by suitably colored, vertical dotted lines. (b) We compute the exact von Neumann LOE of the half chain of the studied models for $N=12$ up to $t=20$ layers of evolution. (c) We provide data on the spectral norm distribution of $A_i \otimes B_i$ for the half-chain bipartition for the TFIM, complementary to the results on the right panel of Fig.~\ref{fig:numerics}.
    }
    \label{fig:numerics2}
\end{figure*}

\end{document}